\theoremstyle{definition}
\newtheorem{theorem}{Theorem}
\newtheorem{observation}[theorem]{Observation}
\newtheorem{lemma}[theorem]{Lemma}
\newtheorem{definition}[theorem]{Definition}
\newcommand{\etal}{~et~al.\ }
\newcommand{\ie}{~i.e.,\ }
\newcommand{\cf}{cf.\ }
\newcommand{\B}{\{0,1\}}
\newcommand{\eps}{\varepsilon}
\newcommand{\f}{\boldsymbol{f}\xspace}
\newcommand{\bp}{\boldsymbol{p}\xspace}
\newcommand{\bq}{\boldsymbol{q}\xspace}
\newcommand{\br}{\boldsymbol{r}\xspace}
\newcommand{\bx}{\boldsymbol{x}\xspace}
\newcommand{\bd}{\boldsymbol{\delta}\xspace}
\newcommand{\poly}{\mathop{\rm poly}}
\newcommand{\N}{\mathbb{N}\xspace}
\newcommand{\R}{\mathbb{R}\xspace}
\newcommand{\Z}{\mathbb{Z}\xspace}
\newcommand{\problem}[1]{\ensuremath{\textsc{#1}}\xspace}
\newcommand{\EOML}{\problem{End-Of-Metered-Line}}
\newcommand{\EOMLshort}{\problem{EOML}}
\newcommand{\PLCPshort}{\problem{PLCP}}
\newcommand{\SARRIVAL}{\problem{S-Arrival}}
\newcommand{\ARRIVAL}{\problem{Arrival}}
\newcommand{\CLOPT}{\problem{CLOpt}}
\newcommand{\LOPT}{\problem{LocalOpt}}
\newcommand{\algo}[1]{\textsc{#1}\xspace}
\newcommand{\RUN}{\algo{Run}}
\newcommand{\Successor}{\algo{Successor}}
\newcommand{\Predecessor}{\algo{Predecessor}}
\newcommand{\Valuation}{\algo{Valuation}}
\newcommand{\class}[1]{\ensuremath{\mathsf{#1}}\xspace}
\newcommand{\FP}{\class{FP}}
\newcommand{\NP}{\class{NP}}
\newcommand{\coNP}{\class{coNP}}
\newcommand{\UP}{\class{UP}}
\newcommand{\coUP}{\class{coUP}}
\newcommand{\TFNP}{\class{TFNP}}
\newcommand{\FPSPACE}{\class{FPSPACE}}
\newcommand{\CLS}{\class{CLS}}
\newcommand{\PLS}{\class{PLS}}
\newcommand{\UPcoUP}{\ensuremath{\UP\cap\coUP}\xspace}
\newcommand{\NPcoNP}{\ensuremath{\NP\cap\coNP}\xspace}
\newcommand{\NL}{\class{NL}}
\newif\ifarxiveversioninternalmacro
\title{ARRIVAL: Next Stop in CLS\footnote{The research leading to these results has received funding from the European Research Council under the European Union's Seventh Framework Programme (FP/2007-2013) / ERC Grant Agreement n.~616787 and was supported by the project 17-09142S of GA \v{C}R. P.~Hubáček and V. Slívová were partially supported by PRIMUS grant PRIMUS/17/SCI/9. T.D. Hansen was partially supported by BARC which is funded by a VILLUM Investigator grant (16582) to Mikkel Thorup.}}
\author[1]{Bernd G\"artner}
\author[2]{Thomas Dueholm Hansen}
\author[3]{Pavel Hubáček}
\author[3]{Karel Král}
\author[4]{Hagar Mosaad}
\author[3]{Veronika Slívová}
\affil[1]{Department of Computer Science, ETH Z\"urich, Switzerland\protect\\
  \texttt{gaertner@inf.ethz.ch}}
\affil[2]{Department of Computer Science, University of Copenhagen, Denmark\protect\\
  \texttt{dueholm@di.ku.dk}}
\affil[3]{Computer Science Institute, Charles University, Prague, Czech Republic\protect\\
\texttt{\{hubacek, kralka, slivova\}@iuuk.mff.cuni.cz}}
\affil[4]{Department of Computer Science and Engineering, German University in Cairo, Egypt\protect\\
	\texttt{hagar.omar@student.guc.edu.eg}}
\begin{document}

\maketitle

\begin{abstract}
We study the computational complexity of \ARRIVAL, a zero-player game on $n$-vertex switch graphs introduced by Dohrau, G{\"{a}}rtner, Kohler, Matou{\v{s}}ek, and Welzl.
They showed that the problem of deciding termination of this game is contained in \NPcoNP.
Karthik~C.~S. recently introduced a search variant of \ARRIVAL and showed that it is in the complexity class~\PLS.
In this work, we significantly improve the known upper bounds for both the decision and the search variants of \ARRIVAL.

First, we resolve a question suggested by Dohrau\etal and show that the decision variant of \ARRIVAL is in \UPcoUP.
Second, we prove that the search variant of \ARRIVAL is contained in~\CLS. Third, we give a randomized $\mathcal{O}(1.4143^n)$-time algorithm to solve both variants.

Our main technical contributions are (a) an efficiently verifiable characterization of the unique witness for termination of the \ARRIVAL game, and (b) an efficient way of
sampling  from the state space of the game. We show that the problem of finding the unique witness is contained in \CLS, whereas it was previously conjectured to be
\FPSPACE-complete. The  efficient sampling procedure yields the first algorithm for the problem that has expected runtime $\mathcal{O}(c^n)$ with $c<2$.
\end{abstract}

\section{Introduction}

Variants of switch graphs have applications and are studied for example in combinatorics and in automata theory (\cf \cite{KatzRW12} and the references therein).
Dohrau\etal\cite{Dohrau2017} introduced \ARRIVAL, a natural computational problem on switch graphs, which they informally described as follows:

\blockcquote{Dohrau2017}{
	Suppose that a train is running along a railway network, starting from a designated origin, with the goal of reaching a designated destination. The network, however, is of a special nature: every time the train traverses a switch, the switch will change its position immediately afterwards. Hence, the next time the train traverses the same switch, the other direction will be taken, so that directions alternate with each traversal of the switch.

Given a network with origin and destination, what is the complexity of deciding whether the train, starting at the origin, will eventually reach the destination?
}

The above rather straightforward question remains unresolved.
Dohrau~et~al.~\cite{Dohrau2017} showed that deciding \ARRIVAL is unlikely to be
\NP-complete (by demonstrating that it is in \NPcoNP), but it is currently not known to be efficiently solvable.

To determine whether the train eventually reaches its destination, it is natural to consider a \emph{run profile},\ie the complete transcript describing how many times the train traversed each edge.
Dohrau\etal\cite{Dohrau2017} presented a natural integer programming interpretation of run profiles called \emph{switching flows}, which have the advantage of being trivial to verify.
The downside of switching flows is that they do not guarantee to faithfully represent the number of times each edge has been traversed; a switching flow might contain superfluous circulations compared to a valid run profile.
Nevertheless, Dohrau\etal\cite{Dohrau2017} proved that the existence of a switching flow implies that the train reaches its destination, and thus a switching flow constitutes an \NP witness for \ARRIVAL.

The \coNP  membership was shown by an insightful observation about the structure of switch graphs.
Specifically, the train reaches its destination if and only if it never enters a node from which there is no directed path to $ d $.
The railway network can thus be altered so that all such vertices point to an additional ``dead-end'' vertex $ \bar{d} $.
The \coNP witness is then simply a switching flow from the origin to the dead-end $ \bar{d} $.

Given that the decision variant of \ARRIVAL is in \NPcoNP, it is natural to study the search complexity of \ARRIVAL in the context of total search problems with the guaranteed existence of a solution,\ie within the complexity class \TFNP (which contains the search analogue of \NPcoNP).
Total search problems are classified into subclasses of \TFNP using the methodology proposed by Papadimitriou~\cite{Papadimitriou94} that clusters computational problems according to the type of argument assuring the existence of a solution.
Karthik~C.~S.~\cite{Karthik17} noticed that the search for a switching flow is a prime candidate to fit into the hierarchy of \TFNP problems.
He introduced \SARRIVAL, a search version of \ARRIVAL that seeks a switching flow to either the destination  $d$ or the dead-end vertex $\overline{d}$, and showed that it is contained in the complexity class \PLS~\cite{JohnsonPY88} of total problems amenable to local search.

Fearnley\etal\cite{FearnleyGMS17a} recently studied multiple variants of reachability games on switch graphs and as one of their results gave a lower bound on the complexity of deciding ARRIVAL.
Specifically, they showed that \ARRIVAL is \NL-hard.

\subsection{Our Results}\label{sec:our-results}
One of the open problems suggested by Dohrau\etal\cite{Dohrau2017} was whether deciding the termination of \ARRIVAL is contained in \UPcoUP.
Recall that given a railway network with an origin~$ o $ and a destination~$ d $, the transcript of the route of the train captured in the run profile from~$ o $ to~$ d $ (if it exists) is unique.
We show that it is possible to efficiently decide whether a switching flow corresponds to a run profile, which provides a positive answer to the above question and places \ARRIVAL inside \UPcoUP. We similarly also improve the upper bound on the search complexity of \ARRIVAL:
We show that \SARRIVAL is contained in the complexity class \CLS.
Daskalakis and Papadimitriou~\cite{DaskalakisP11} introduced \CLS to classify problems that can be reduced to local search over continuous domains.
\CLS contains multiple important search problems such as solving simple stochastic games, finding equilibria in congestion games, and solving linear complementarity problems on P-matrices.
For all of these problems, as well as for \SARRIVAL, we currently do not have a polynomial time algorithm, and they are not known to be complete for some subclass of \TFNP.

We establish the containment in \CLS through a reduction to \EOML (\EOMLshort), a total search problem that was recently introduced by Hub\'{a}\v{c}ek and Yogev~\cite{HubacekY17} who also showed that it is in \CLS.
In \EOMLshort we are given a source in a directed graph with vertices of in-degree and out-degree at most one, and the task is to find a sink or a source different from the given trivial source.
The access to the graph is given locally via information about the successor and predecessor of each vertex together with its distance from the trivial source (for the formal definition see Definition~\ref{def:EOML}).

Our result makes it unlikely for \SARRIVAL to be \PLS-hard, which was one of the possibilities suggested by the containment in \PLS shown by Karthik~C.~S.~\cite{Karthik17}.
This is due to known black-box separations among subclasses of \TFNP~\cite{Morioka01,BureshM04}, which suggest that \CLS is a proper subclass of \PLS.
Note that our reduction from \SARRIVAL to \EOML results in instances with a significantly restricted structure: the \EOML graph consists only of a single path and many isolated vertices.
We believe that this structure may in future work be used to show that \SARRIVAL is contained in \FP.

Our reduction from \SARRIVAL to \EOML also implies that we can use an algorithm by Aldous~\cite{Aldous83} to solve \SARRIVAL. The algorithm is randomized and runs in $\mathcal{O}(2^{n/2} poly(n))$ expected time on switch graphs with $n$ vertices. This is the first algorithm with expected runtime $\mathcal{O}(c^n)$ for $c < 2$. (A trivial $\mathcal{O}(2^{n} poly(n))$ time algorithm can be obtained by following the path of the train through the network.) Aldous' algorithm, in fact, solves any problem in \PLS. It samples a large number of candidate solutions and then performs a local search from the best sampled solution. The advantage of our reduction is that the resulting search space for \EOML is small enough to make Aldous' algorithm useful, unlike in the previous reduction by Karthik~C.~S.~\cite{Karthik17} that showed containment in \PLS.

Fearnley\etal\cite{FearnleyGMS17} recently gave a reduction from $P$-matrix linear complementarity problems (\PLCPshort) to \EOML. As in our case for \ARRIVAL, this implies that Aldous' algorithm can be used to solve \PLCPshort. In fact this gives the fastest known randomized algorithm for \PLCPshort, running in expected time $\mathcal{O}(2^{n/2} poly(n))$ for input matrices of dimension $n \times n$. Fearnley\etal do not make this observation themselves, but it is straightforward to check that their reduction also gives an efficient representation of the search space. Although Aldous' algorithm is very simple, it thus non-trivially improves the best runtime of algorithms for multiple problems. We believe that this way of applying Aldous' algorithm is a powerful technique that will produce additional results in the future.

\subsection{Technical remarks}\label{sec:technical}

Recall that a switching flow is a run profile with additional superfluous circulations compared to the valid run profile. Our main technical observation is a characterization of switching flows that correspond to the valid run profile. Given a switch graph $G$ and a switching flow $f$, we consider the subgraph $ G^* $ induced over the railway network by the ``last-used'' edges;
for every vertex $ v $, we include in $ G^* $ only the outgoing edge that was, according to the switching flow, used by the train last time it left from $ v $. Note that such last-used edges can be efficiently identified simply by considering the parity of the total number of visits at every vertex. When $f$ is a valid run profile, then it is straightforward to see that the subgraph $G^*$ is acyclic. We show that this property is in fact a characterization,\ie any switching flow for which the induced graph $ G^* $ is acyclic must be a run profile.
Given that this property is easy to check, we can use it to efficiently verify run profiles as \UP witnesses. (The \coUP witness is then a run profile to the dead-end at $ \bar{d} $.)

For our reduction from \SARRIVAL to \EOML we extend the above observation to partial switching flows that are not required to end at the destination. The vertices of the \EOML graph created by our reduction correspond to partial switching flows in the \SARRIVAL instance. The directed edges connect partial run profiles to their natural successors and predecessors, i.e., the partial run extended or shortened by a single step of the train.
Any switching flow that does not correspond to some partial run profile is an isolated vertex in the \EOML graph.
Finally, the trivial source is the empty switching flow, and the distance from it can be computed for any partial run simply as the number of steps taken by the train so far.
Given that there is only a single path in the resulting \EOML graph and that its sink is exactly the complete run, we get that the unique solution to the \EOML instance gives us a solution for the original instance of \SARRIVAL.

To make the reduction efficiently computable, we need to address the verification of partial run profiles.
As it turns out, partial run profiles can be efficiently verified using the graph $ G^* $, in a similar way to complete run profiles discussed above.
The main difference is that the graph of last-used edges for a partial run profile can contain a cycle, as the train might visit the same vertex multiple times on its route to the destination.
However, we show that there is at most one cycle in $ G^* $, which always contains the current end-vertex of the partial run.
The complete characterization of partial run profiles (which covers also full run profiles) is given in Lemma~\ref{lem:run-profile-characterization}, and the formal reduction is described in Section~\ref{sec:SARRIVAL-in-CLS}.

Finally, we show that every partial run profile is uniquely determined by its
last-used edges and its end-vertex. This limits the size of the search space for the
\EOMLshort instances that are produced by our reduction, which allows us to efficiently use Aldous' algorithm~\cite{Aldous83} to solve \ARRIVAL and \SARRIVAL.

%\subsection{Organization of the paper}
%The rest of the paper is organized as follows. We define \ARRIVAL, switch graphs, switching flows, and run profiles in Section~\ref{sec:prelim}. In Section~\ref{sec:run-profile-verification} we prove our main technical lemma, showing that we can decide in polynomial time whether a switching flow is a run profile. In Section~\ref{sec:arrival-complexity} we show that \ARRIVAL is in \UPcoUP, and that \SARRIVAL is in \CLS. In Section~\ref{sec:efficient-CLS} we describe how Aldous' algorithm \cite{Aldous83} can be used to solve \SARRIVAL, and in Section~\ref{sec:conclusion} we end with some concluding remarks and open problems.

\section{Preliminaries}\label{sec:prelim}

In the rest of the paper we use the following standard notation.
For $ k\in\N $, we denote by $ [k] $ the set $ \{1,\dots,k\} $.
For a graph $ G = (V,E) $, we reserve $n = |V|$ for the number of vertices.
The basic object that we study are switch graphs, as defined by Dohrau\etal\cite{Dohrau2017}.

\begin{definition}[switch graph]\label{def:switch-graph}
	A \emph{switch graph} is a tuple $G=(V,E,s_0,s_1)$ where $s_0, s_1\colon V \rightarrow V$ and $E=\{(v,s_0(v)),(v,s_1(v)) \mid \forall v \in V\}$.\footnote{Whenever $s_0(v) = s_1(v)$ for some vertex $v \in V$ we depict them as multiple edges in figures.}
\end{definition}

In order to avoid cumbersome notation, we slightly overload the use of $s_0,s_1$ and treat both as functions from vertices to edges; that is by $s_b(v)$ we denote the edge $(v,s_b(v))$ for~$b \in \B$.
We use this convention throughout the paper unless stated otherwise.

\begin{algorithm}[t] 	
	\small
	\SetKwInOut{Input}{Input}
	\SetKwInOut{Output}{Output}
	\Input{ a switch graph $G=(V,E,s_0,s_1)$ and two vertices $o,d\in V$ }
	\Output{ for each edge $ e\in E$, the number of times the train traversed $ e $ }
	  $v\gets o$ \hfill\tcp{position of the train}
		$\forall u \in V$ set $s\_\text{curr}[u] \gets s_0(u)$ and $s\_\text{next}[u] \gets s_1(u)$\\
		$\forall e \in E$ set $\br[e]\gets 0$ \hfill\tcp{initialize the run profile}
		$step \gets 0$\\
		\While{$v\neq d$}{
			$(v,w)\gets s\_\text{curr}[v]$\hfill\tcp{compute the next vertex}
			$\br[s\_\text{curr}[v]]$++\hfill\tcp{update the run profile}
			swap$(s\_\text{curr}[v],s\_\text{next}[v])$\\ %\tcp*{change active edge in the current vertex}
			$v\gets w$ \hfill\tcp{move the train}
			$step \gets step + 1$\\
		}
		\Return $ \br $
	\caption{{\sc{Run}}}
	\label{alg:run} % Keep this label where it is now! References show ?? otherwise!
\end{algorithm}

%\begin{lstlisting}[caption={ \RUN },label=alg:run,captionpos=t,float,abovecaptionskip=-\medskipamount,mathescape=true]
%Input: a switch graph $G=(V,E,s_0,s_1)$ and two vertices $o,d\in V$
%Output: number of times the train traverses each edge
%$ v=o $
%foreach $ u \in V $ set $s_\text{curr}[u] =  s_0(u) $ and $s_\text{next}[u] =  s_1(u) $
%foreach $ e \in E $ set $ r[e]=0 $
%$ \text{step} = 0 $
%while ($ v \neq d $) {
	%$(v,w) = s_\text{curr}[v]$ //compute next vertex
	%$ r[s_\text{curr}[v]]++ $ //change current run profile
	%$ \text{swap}(s_\text{curr}[v], s_\text{next}[v]) $
	%$ v=w $ // move the train
	%$ \text{step} = \text{step} + 1 $
%}
%return r;
%\end{lstlisting}

The \ARRIVAL problem was formally defined by Dohrau\etal\cite{Dohrau2017} as follows.

\begin{definition}[\ARRIVAL~\cite{Dohrau2017}]\label{def:arrival}
Given a switch graph $G=(V,E,s_0,s_1)$  and two vertices~$o,d \in V$, the \ARRIVAL problem is to decide whether the algorithm {\sc{Run}} (Algorithm~\ref{alg:run}) terminates,\ie whether the train reaches the destination $ d $ starting from the origin $ o $.
\end{definition}

To simplify theorem statements and our proofs, we assume without loss of generality that both $s_0(d)$ and $s_1(d)$ end in $d$.

A natural witness for termination of the \RUN procedure considered in previous work~(e.g.,~\cite{Dohrau2017}) is a switching flow.
We extend the definition of a switching flow to allow for partial switching flows that do not necessarily end in the desired destination $ d $.

\begin{definition}[(partial) switching flow, end-vertex]\label{def:partialswitchingflow}
Let $G=(V,E,s_0,s_1)$ be a switch graph.
For $ o,d \in V $, we say that $\f \in \N^{2n}$ is a \emph{switching flow from $o$ to $d$} if the following two conditions hold.
 \begin{description} 
\item[Kirchhoff's Law (flow conservation):]\label{cond:kirchhoff}
$$\forall v \in V \colon \sum_{e=(u,v) \in E}
\f_e - \sum_{e=(v,w) \in E} \f_e = [v = d] - [v = o]\ ,$$
where $[\cdot]$ is the indicator variable of the event in brackets.
\item[Parity Condition:]\label{cond:parity}
$$\forall v \in V\colon \f_{s_1(v)}\leq
\f_{s_0(v)} \leq \f_{s_1(v)}+1\ .$$
 \end{description} 
Kirchoff's law means that $o$ emits one unit of flow, $d$ absorbs one unit of flow, and at all other vertices, in-flow equals out-flow. If $d=o$, we have a circulation.

Given an instance $(G= (V,E,s_0,s_1),o,d)$ of {\sc Arrival}, we say that $\f$ is a \emph{switching flow} if it is a switching flow from $o$ to $d$.
A vector $\f \in \N^{2n}$ is called a \emph{partial switching flow} iff $\f$  is a switching flow from $o$ to $v$ for some vertex $v \in V$.
We say that $v$ is the \emph{end-vertex} of the partial switching flow.
We denote the end-vertex of $\f$ by $v_{\f}$.
\end{definition}

\begin{definition}[(partial) run profile]\label{def:run-profile}
A \emph{run profile} is the switching flow $ \br $ returned by the algorithm \RUN (Algorithm~\ref{alg:run}) upon termination.
A \emph{partial run profile} is a partial switching flow corresponding to some intermediate value of $\br$ in the algorithm \RUN (Algorithm~\ref{alg:run}).
\end{definition}

\begin{observation}[Dohrau\etal{\cite[Observation 1]{Dohrau2017}}]
Each (partial) run profile is a (partial) switching flow.
\end{observation}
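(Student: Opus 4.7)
My plan is to proceed by induction on the step counter in Algorithm~\ref{alg:run}, proving the following strengthened invariant: after every iteration the current vector $\br$ is a partial switching flow from $o$ whose end-vertex equals the train's position $v$, and for every $u \in V$ the auxiliary entry $s\_\text{curr}[u]$ equals $s_0(u)$ when $\br[s_0(u)] = \br[s_1(u)]$ and $s_1(u)$ when $\br[s_0(u)] = \br[s_1(u)] + 1$. The point of the strengthening is to tie the \texttt{swap} operation in the algorithm to the parity condition of Definition~\ref{def:partialswitchingflow}. The base case $step = 0$ is immediate: $\br \equiv 0$ is a switching flow from $o$ to $o$ (all flow-conservation terms vanish and parity reads $0 \leq 0 \leq 1$), and $s\_\text{curr}[u] = s_0(u)$ matches the first branch of the invariant.

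For the inductive step I would analyse one iteration of the while loop. Let $v$ be the train's current position, let $(v, w) = s\_\text{curr}[v]$, and let $\br'$ denote the updated run profile. Only the flow-conservation equations at $v$ and at $w$ can change, and any changes cancel when $v = w$ (a self-loop); otherwise the out-flow at $v$ grows by one, turning the previous imbalance $1 - [v = o]$ into $-[v = o]$, and the in-flow at $w$ grows by one, turning the previous $-[w = o]$ into $1 - [w = o]$. This is exactly the balance required for $\br'$ to be a switching flow from $o$ to the new end-vertex $w$. For the parity condition only the two edges out of $v$ are affected: by the strengthened invariant the increment is applied to $s_0(v)$ when the two counts are tied and to $s_1(v)$ when $s_0(v)$ was already one ahead, either case preserving $\br'[s_1(v)] \leq \br'[s_0(v)] \leq \br'[s_1(v)] + 1$. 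The subsequent \texttt{swap} restores the invariant on $s\_\text{curr}[v]$.

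There is no genuine obstacle here; the argument is a direct book-keeping exercise on the train's trajectory. The only point that requires care is the interplay between the dynamic variable $s\_\text{curr}$ and the static parity condition, which is precisely what the strengthened invariant captures.
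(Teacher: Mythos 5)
Your proof is correct. The paper itself does not supply a proof here --- the statement is imported verbatim (with the parenthetical extension to partial profiles) from Observation~1 of Dohrau~\etal\cite{Dohrau2017} --- but the induction you give, with the strengthened invariant linking the dynamic state $s\_\text{curr}[u]$ to the static parity $\br[s_0(u)]-\br[s_1(u)]\in\{0,1\}$, is exactly the natural book-keeping argument underlying that observation, and you have checked both the flow-conservation and parity conditions correctly across one iteration of \RUN.
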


\begin{observation}
	An end-vertex $v_{\f}$ of a switching flow $\f$ is computable in polynomial time.
	\label{obs:endVertexPoly}
\end{observation}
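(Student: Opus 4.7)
The plan is to read off the end-vertex directly from Kirchhoff's Law. For every vertex $u \in V$, compute the net flow
$$D(u) = \sum_{e = (x,u) \in E} \f_e \; - \; \sum_{e = (u,y) \in E} \f_e\,.$$
This is just a finite sum of entries of $\f$ for each vertex, so it takes time polynomial in $n$ and in the bit-length of $\f$ (each vertex has in-degree and out-degree at most $2$ by the definition of a switch graph, so at most four terms per vertex).

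Since $\f$ is a (partial) switching flow, there exists some vertex $v_{\f}$ such that $\f$ is a switching flow from $o$ to $v_{\f}$. Kirchhoff's Law in Definition~\ref{def:partialswitchingflow} then forces
$$D(u) \;=\; [u = v_{\f}] - [u = o] \qquad \text{for every } u \in V.$$
Hence exactly one of two cases occurs: either $v_{\f} \neq o$, in which case $D(v_{\f}) = 1$, $D(o) = -1$, and $D(u) = 0$ on all other vertices; or $v_{\f} = o$ (the circulation case), in which case $D(u) = 0$ for every $u$.

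To extract $v_{\f}$, I would therefore scan the values $D(u)$ computed in the first step: return the unique $u$ with $D(u) = 1$ if one exists, and otherwise return $o$. Correctness is immediate from the case analysis above, and the whole procedure clearly runs in polynomial time. There is no real obstacle here: Kirchhoff's Law is baked into the definition of a partial switching flow, so the end-vertex is literally pinned down by the sign pattern of $D$, and everything reduces to summing a handful of numbers at each vertex.
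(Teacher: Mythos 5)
Your argument is correct and is exactly the approach the paper takes (the paper states it in one line: ``It is sufficient to determine which vertex has a net in-flow of one''). You merely spell out the details, including the circulation case $v_{\f}=o$, which the paper's terse proof leaves implicit.
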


\begin{proof}
	It is sufficient to determine which vertex has a net in-flow of one.
\end{proof}

\section{The Complexity of Run Profile Verification}\label{sec:run-profile-verification}

Dohrau\etal\cite{Dohrau2017} proved that it is possible to efficiently verify whether a given vector is a switching flow.
In this section we show that we can also efficiently verify whether a switching flow is a run profile.
Combining this with the results by Dohrau\etal\cite{Dohrau2017}, we prove that the decision problem of \ARRIVAL is in \UPcoUP (see Section~\ref{sec:arrival-decision-complexity}) and that the search problem of \ARRIVAL lies in the complexity class \CLS (see Section~\ref{sec:arrival-search-complexity}).
As outlined in Section~\ref{sec:technical}, our approach for verification of run profiles is based on finding a cycle in a natural subgraph of the railway network~$G$ defined below.
Specifically, we consider the subgraph of $ G $ that contains only the last visited outgoing edge of each vertex, i.e., every vertex has out-degree at most one (see Figure~\ref{fig:GLast} for an illustration).

\begin{definition}[$G_{\f}^*$]\label{def:graphLast}
  Let $(G = (V,E,s_0,s_1),o,d)$ be an instance of \ARRIVAL, and let
  $\f\in \N^{2n}$
  %$\f\in ( [2^n]\cup \{0\})^{2n}$
  be a partial switching flow.
We define a graph $G_{\f}^*=(V,E^*)$ as follows
\begin{align*}
E^* = &\left\{ s_0(v) \colon \forall v \in V \text{ s.t. } \f(s_0(v))\neq \f(s_1(v))  \right\}\cup\\
&{} \left\{ s_1(v) \colon \forall v \in V \text{ s.t. } \f(s_0(v))=\f(s_1(v)) > 0\right\}.
\end{align*}
\end{definition}

%\begin{figure}[t]
%	\centering
%	\begin{minipage}{0.49\linewidth}
%		  \centering
			%\IfFileExists{./pictures/priklad_boosted_flow.pdf}{
%			\includestandalone[mode=image, scale=0.6]{pictures/priklad_boosted_flow}
			%}{\scalebox{0.7}{\input{pictures/tikz_priklad_boosted_flow}}}
%	\end{minipage}
%	\hfill
%	\begin{minipage}{0.49\linewidth}
%		  \centering
			%\IfFileExists{./pictures/priklad_boosted_gl.pdf}{
%			\includestandalone[mode=image, scale=0.6]{pictures/priklad_boosted_gl}
			%}{\scalebox{0.7}{\input{pictures/tikz_priklad_boosted_gl}}}
%	\end{minipage}
%	\caption{An example of a switch graph $ G $ with a switching flow $\f$ (on the left) and the corresponding graph $G_{\f}^*$ (on the right). The $ s_0 $ edges are denoted by full arrows and the $ s_1 $ edges are denoted by dashed arrows. For each edge in $ G $, the switching flow $ \f $ is specified by the adjacent integer value.}
%	\label{fig:GLast}
%\end{figure}

\begin{observation}
	Given a partial switching flow $\f$, the graph $G_{\f}^*$ can be computed in polynomial time.
	\label{obs:lastEdgesPoly}
\end{observation}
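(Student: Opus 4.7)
The plan is to observe that the definition of $G_{\f}^*$ is fully local in the following sense: the set of outgoing edges of each vertex $v$ in $G_{\f}^*$ depends only on the two coordinates $\f(s_0(v))$ and $\f(s_1(v))$ of $\f$, together with knowledge of the two edges $s_0(v)$ and $s_1(v)$ themselves. This immediately suggests the natural algorithm that iterates over all vertices and, for each of them, decides in constant time which (if any) outgoing edge to add.

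Concretely, I would proceed as follows. First, initialize $E^* \gets \emptyset$. Then loop over all $v \in V$. For each $v$, read the values $\f(s_0(v))$ and $\f(s_1(v))$ from the input vector $\f$. Using the definition of $E^*$ in Definition~\ref{def:graphLast}, branch on three cases: if $\f(s_0(v)) \neq \f(s_1(v))$, add $s_0(v)$ to $E^*$; otherwise, if $\f(s_0(v)) = \f(s_1(v)) > 0$, add $s_1(v)$ to $E^*$; otherwise (i.e.\ $\f(s_0(v)) = \f(s_1(v)) = 0$) add nothing. Finally, return $G_{\f}^* = (V, E^*)$.

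For correctness, each vertex $v$ contributes to $E^*$ through exactly the conjunction of conditions listed in Definition~\ref{def:graphLast}, and the above case analysis is literally a rephrasing of those conditions; note that by the Parity Condition in Definition~\ref{def:partialswitchingflow} the two values $\f(s_0(v))$ and $\f(s_1(v))$ differ by at most one, so the three cases are exhaustive and mutually exclusive, and each vertex contributes at most one outgoing edge. For the running time, the loop performs $n$ iterations and each iteration executes a constant number of arithmetic comparisons on numbers of bit-length polynomial in the input size (since $\f$ is part of the input); consequently the total running time is polynomial in $|V|$ and in the bit-length of $\f$.

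There is no real obstacle here: the statement is almost immediate from the definition. The only point worth making explicit is that $\f$ is assumed to be given as part of the input (so its entries can be accessed directly), and that appealing to the Parity Condition shows the cases in the definition cover all possibilities, which is why the procedure outputs precisely $G_{\f}^*$.
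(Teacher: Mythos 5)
Your proof is correct and matches the (implicit) reasoning: the paper states this observation without a proof, treating it as immediate from Definition~\ref{def:graphLast}, and your direct iteration over vertices with the three-way case split is exactly the obvious algorithm one would write down. The aside about the Parity Condition making the cases exhaustive and mutually exclusive is a nice touch, though not strictly needed for computability.
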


\begin{lemma}\label{lem:run-profile-characterization}
	A partial switching flow $\f$ is a partial run profile iff $\f_{s_0(d)} = \f_{s_1(d)} = 0$ and one of the following two conditions holds:
	 \begin{enumerate} 
		\item There exists no cycle in $G_{\f}^*$.
		\item There exists exactly one cycle in $G_{\f}^*$ and this cycle contains the end-vertex of $\f$.
	 \end{enumerate} 
\end{lemma}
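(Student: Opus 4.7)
The proof has a forward and a reverse direction; the reverse direction is the substantive one. For the forward direction, suppose $\f$ is obtained after $t$ steps of \RUN, with visited vertex sequence $v_0=o,v_1,\dots,v_t=v_{\f}$. The condition $\f_{s_0(d)}=\f_{s_1(d)}=0$ is immediate since \RUN terminates upon first visiting $d$. For the cycle structure, I assign to each visited vertex $v$ its last-visit time $\lambda(v)=\max\{i:v_i=v\}$; by construction of $G_{\f}^*$, the out-edge of any visited $v\neq v_{\f}$ goes to $v_{\lambda(v)+1}$, whose last-visit time strictly exceeds $\lambda(v)$. Following out-edges from any vertex therefore strictly increases $\lambda$ until the walk terminates at $v_{\f}$ (giving Condition~1) or returns to $v_{\f}$ via its out-edge, closing the unique cycle through $v_{\f}$ (giving Condition~2).

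For the reverse direction, I induct on $T=\sum_e \f_e$. The base case $T=0$ forces $\f=0$ with $v_{\f}=o$, matching the empty partial run profile. For $T\geq 1$ I identify a vertex $u$ playing the role of the train's previous position, set $\f'=\f-\mathbf{1}_{(u,v_{\f})}$, verify that $\f'$ still satisfies the lemma's conditions, and invoke the inductive hypothesis together with one additional step of \RUN. Under Condition~2, I take $u$ to be the predecessor of $v_{\f}$ on the unique cycle. Under Condition~1, a short functional-graph argument on the set of vertices with positive outflow shows that $v_{\f}$ itself must have outflow $0$ (otherwise a cycle would have to form), so Kirchhoff's law gives inflow $1$ at $v_{\f}$; since every edge of $G_{\f}^*$ carries at least one unit of flow, exactly one in-edge of $v_{\f}$ lies in $G_{\f}^*$, identifying $u$. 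In both cases $u\neq d$ because $\f$ vanishes on edges leaving $d$. Non-negativity, Kirchhoff, and parity for $\f'$ are routine once one notes that subtracting one unit from the last-used edge of $u$ is exactly what preserves parity, and a short parity calculation at $u$ confirms that the next edge \RUN takes from the state $\f'$ is precisely $(u,v_{\f})$.

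The main obstacle is verifying that $\f'$ still satisfies the cycle condition, with the new end-vertex being $u$ (or still $v_{\f}$ in the self-loop case $u=v_{\f}$, which can only arise under Condition~2). Crucially, $G_{\f'}^*$ differs from $G_{\f}^*$ only at the out-edge of $u$, since the flow values determining every other vertex's out-edge are unchanged by the undo. Under Condition~1 there were no cycles to begin with, so any cycle in $G_{\f'}^*$ must use $u$'s new out-edge and hence contain $u$. Under Condition~2, the single cycle went through $u$'s old out-edge and is therefore destroyed; any cycle in $G_{\f'}^*$ must then use $u$'s new out-edge and contain $u$. In both cases $G_{\f'}^*$ has at most one cycle, and any such cycle contains the new end-vertex, closing the induction.
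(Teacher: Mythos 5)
Your proof is correct in its essentials but takes a genuinely different route from the paper's. In the forward direction, the paper inducts on the number of steps of \RUN, tracking how $G_{\br}^*$ changes as each new edge is added; your last-visit-time argument proves the same structural facts directly, without induction, and is arguably slicker. The bigger difference is in the reverse direction. The paper argues by contradiction: letting $\br$ be the longest partial run profile coordinatewise dominated by $\f$ and $\bd = \f - \br$, it shows that if $\bd \neq 0$ then $\bd$ is a circulation whose support yields a cycle in $G_{\f}^*$ that avoids $v_{\f}$ (using an auxiliary observation that $\bd$ vanishes on the out-edges of $v_{\br}$), contradicting the hypothesis. You instead induct on $T = \sum_e \f_e$, peeling off the last step of the train and checking that the shortened flow $\f'$ still satisfies the characterization with the appropriate end-vertex. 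Your route is more constructive (it literally rebuilds the run one step at a time), and the ``undo'' bookkeeping for $G^*$ in your reverse direction mirrors the ``do'' bookkeeping the paper uses in its forward direction. Both approaches are valid; yours trades a single global case analysis for per-step verification.

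One place where your argument, as written, leaves a gap: under Condition~1 you identify $u$ by asserting that ``exactly one in-edge of $v_{\f}$ lies in $G_{\f}^*$,'' justified from inflow $1$ at $v_{\f}$ together with the fact that every $G_{\f}^*$-edge carries positive flow. That combination gives only \emph{at most} one such in-edge. You do genuinely need that the unique positive-flow in-edge $(u,v_{\f})$ lies in $G_{\f}^*$, i.e., that it is the last-used edge of $u$: otherwise subtracting a unit from it breaks the parity condition at $u$ and your ``short parity calculation'' fails. The missing half follows from one more instance of the walk argument you already invoke: $u$ has positive outflow, hence an out-edge in $G_{\f}^*$; following out-edges from $u$ stays on positive-outflow vertices and, by acyclicity, must terminate at the only zero-outflow vertex with positive inflow, namely $v_{\f}$; the last edge of that walk is a positive-flow $G_{\f}^*$-edge into $v_{\f}$, hence equals $(u,v_{\f})$, and if the walk had more than one edge you would revisit $u$ and close a cycle, which is excluded. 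This is a small, fillable gap, but it should be spelled out rather than folded into the ``every $G_{\f}^*$-edge carries flow'' observation, which does not by itself imply it.
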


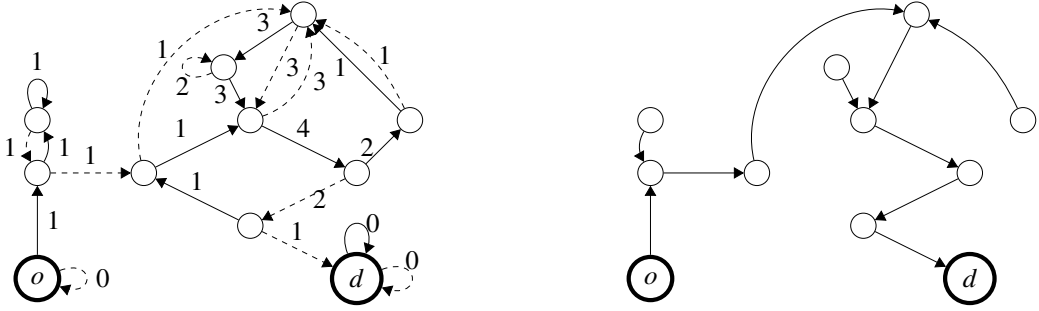
\begin{figure}[t]
	\centering
	\begin{minipage}{0.49\linewidth}
		\centering
		\IfFileExists{./pictures/priklad_boosted_flow.pdf}{
		\includestandalone[mode=image, scale=0.6]{pictures/priklad_boosted_flow}
	}{\scalebox{0.7}{\begin{tikzpicture}

		\Large
		\tikzset{vertex/.style = {shape=circle,draw,minimum size=0.8em}}
		\tikzset{edge/.style = {-triangle 60}}
		\tikzset{edge2/.style = {-triangle 60, dashed}}

		\node[style = {shape=circle,line width=0.80mm,draw,minimum size=1.3em}] (o) at    (2,-2) {$o$};
		\node[style = {shape=circle,line width=0.80mm,draw,minimum size=1.3em}] (d) at    (8,-2) {$d$};
		\foreach \nodename/\x/\y in { v_1/2/0, v_2/2/1, v_3/4/0, v_4/5.5/2, v_5/6/1, v_6/6/-1, v_7/7/3, v_8/8/0, v_9/9/1} {
			\node[vertex] (\nodename) at    (\x,\y) {};
		}
		\draw[edge] (o) to node[right] {1} (v_1);
		\draw[edge2] (o) to[out=20,in=340,looseness=6] node[right] {0} (o);

		\draw[edge] (v_1) to[bend right] node[right] {1} (v_2);
		\draw[edge2] (v_1) to node[above] {1} (v_3);

		\draw[edge] (v_2) to[out=115,in=65,looseness=10] node[above] {1} (v_2);
		\draw[edge2] (v_2) to[bend right] node[left] {1} (v_1);

		\draw[edge] (v_3) to node[above left] {1} (v_5);
		\draw[edge2] (v_3) to[bend left=58] node[left] {1} (v_7);

		\draw[edge] (v_4) to node[left] {3} (v_5);
		\draw[edge2] (v_4) to[out=205,in=155,looseness=10] node[below] {2} (v_4);

		\draw[edge] (v_5) to node[above] {4} (v_8);
		\draw[edge2] (v_5) to[bend right=45] node[right] {3} (v_7);

		\draw[edge] (v_6) to node[above] {1} (v_3);
		\draw[edge2] (v_6) to node[above] {1} (d);

		\draw[edge] (v_7) to node[above] {3} (v_4);
		\draw[edge2] (v_7) to node[right] {3} (v_5);

		\draw[edge] (v_8) to node[left] {2} (v_9);
		\draw[edge2] (v_8) to node[right] {2} (v_6);		

		\draw[edge] (v_9) to node[left] {1} (v_7);	
		\draw[edge2] (v_9) to[bend right=20] node[right] {1} (v_7);	

		\draw[edge] (d) to[out=110,in=70,looseness=6] node[right] {0} (d);	
		\draw[edge2] (d) to[out=20,in=340,looseness=6] node[above] {0} (d);

		%\draw[edge] (a) to[loop above] node[above] {$s_0$} (a);
		%\draw[edge] (a) to node[right] {$s_1$} (b);
		%\draw[edge] (b) to node[below left] {$s_0$} (o);
		%\draw[edge] (b) to[loop right] node[right] {$s_1$} (b);
		%\draw[edge] (d) to[loop above] node[above] {$s_0,s_1$} (d);
	
\end{tikzpicture}}}
	\end{minipage}
	\hfill
	\begin{minipage}{0.49\linewidth}
		\centering
		\IfFileExists{./pictures/priklad_boosted_gl.pdf}{
		\includestandalone[mode=image, scale=0.6]{pictures/priklad_boosted_gl}
	}{\scalebox{0.7}{	\begin{tikzpicture}

		\Large
		\tikzset{vertex/.style = {shape=circle,draw,minimum size=0.8em}}
		\tikzset{edge/.style = {-triangle 60,> = latex'}}
		\tikzset{edge2/.style = {-triangle 60,> = latex', dashed}}

		\node[style = {shape=circle,line width=0.80mm,draw,minimum size=1.3em}] (o) at    (2,-2) {$o$};
		\node[style = {shape=circle,line width=0.80mm,draw,minimum size=1.3em}] (d) at    (8,-2) {$d$};
		\foreach \nodename/\x/\y in { v_1/2/0, v_2/2/1, v_3/4/0, v_4/5.5/2, v_5/6/1, v_6/6/-1, v_7/7/3, v_8/8/0, v_9/9/1} {
			\node[vertex] (\nodename) at    (\x,\y) {};
		}
		\draw[edge] (o) to node[right] {} (v_1);
		%\draw[edge] (o) to[loop right] node[right] {} (o);

		%\draw[edge] (v_1) to[bend right] node[right] {} (v_2);
		\draw[edge] (v_1) to node[above] {} (v_3);

		%\draw[edge] (v_2) to[loop above] node[above] {} (v_2);
		\draw[edge] (v_2) to[bend right] node[left] {} (v_1);

		%\draw[edge] (v_3) to node[above left] {} (v_5);
		\draw[edge] (v_3) to[bend left=58] node[left] {} (v_7);

		\draw[edge] (v_4) to node[left] {} (v_5);
		%\draw[edge] (v_4) to[loop left] node[below] {} (v_4);

		\draw[edge] (v_5) to node[above] {} (v_8);
		%\draw[edge] (v_5) to[bend right=45] node[right] {} (v_7);

		%\draw[edge] (v_6) to node[above] {} (v_3);
		\draw[edge] (v_6) to node[above] {} (d);

		%\draw[edge] (v_7) to node[above] {} (v_4);
		\draw[edge] (v_7) to node[right] {} (v_5);

		%\draw[edge] (v_8) to node[left] {} (v_9);
		\draw[edge] (v_8) to node[right] {} (v_6);		

		%\draw[edge] (v_9) to node[left] {} (v_7);	
		\draw[edge] (v_9) to[bend right=20] node[right] {} (v_7);	

		%\draw[edge] (d) to[loop above] node[right] {0} (d);	
		%\draw[edge2] (d) to[loop below] node[right] {0} (d);	
	
\end{tikzpicture}}}
	\end{minipage}
	\caption{An example of a switch graph $ G $ with a switching flow $\f$ (on the left) and the corresponding graph $G_{\f}^*$ (on the right). The $ s_0 $ edges are denoted by full arrows and the $ s_1 $ edges are denoted by dashed arrows. For each edge in $ G $, the switching flow $ \f $ is specified by the adjacent integer value.}
	\label{fig:GLast}
\end{figure}

\begin{figure}[t]
	\centering
	\begin{minipage}{0.32\linewidth}
		\IfFileExists{./pictures/priklad_cycle.pdf}{
		\includestandalone[mode=image, scale=0.6]{pictures/priklad_cycle}
	}{\scalebox{0.7}{	\begin{tikzpicture}

		\Large
		\tikzset{vertex/.style = {shape=circle,draw,minimum size=0.8em}}
		\tikzset{edge/.style = {-triangle 60,> = latex'}}
		\tikzset{edge2/.style = {-triangle 60,> = latex', dashed}}

		\node[style = {shape=circle,line width=0.80mm,draw,minimum size=1.3em}] (o) at    (4,2) {$o$};
		\node[style = {shape=circle,line width=0.80mm,draw,minimum size=1.3em}] (d) at    (4,0) {$d$};
		\foreach \nodename/\x/\y in { v_1/2/0, v_2/2/2} {
			\node[vertex] (\nodename) at    (\x,\y) {};
		}
		\draw[edge] (o) to node[right] {} (v_1);
		\draw[edge2] (o) to[loop right] node[right] {} (o);

		\draw[edge] (v_1) to node[right] {} (v_2);
		\draw[edge2] (v_1) to node[below] {} (d);

		\draw[edge] (v_2) to[loop above] node[above] {} (v_2);
		\draw[edge2] (v_2) to[bend right] node[left] {} (v_1);

		\draw[edge] (d) to[loop above] node[right] {} (d);	
		\draw[edge2] (d) to[loop right] node[above] {} (d);

		%\draw[edge] (a) to[loop above] node[above] {$s_0$} (a);
		%\draw[edge] (a) to node[right] {$s_1$} (b);
		%\draw[edge] (b) to node[below left] {$s_0$} (o);
		%\draw[edge] (b) to[loop right] node[right] {$s_1$} (b);
		%\draw[edge] (d) to[loop above] node[above] {$s_0,s_1$} (d);
	
\end{tikzpicture}}}
	\end{minipage}
	%\begin{minipage}{0.24\linewidth}
	%\centering
		%\IfFileExists{./pictures/priklad_cycle_flow2.pdf}{
	%\includestandalone[mode=image, scale=0.75]{pictures/priklad_cycle_flow2}
	%}{\scalebox{0.7}{\input{pictures/tikz_priklad_cycle_flow2}}}
	%\end{minipage}
	\begin{minipage}{0.21\linewidth}
		\centering
		\IfFileExists{./pictures/priklad_cycle_gl1.pdf}{
		\includestandalone[mode=image, scale=0.6]{pictures/priklad_cycle_gl1}
	}{\scalebox{0.7}{	\begin{tikzpicture}

		\Large
		\tikzset{edge/.style = {-triangle 60,> = latex'}}
		\tikzset{edge2/.style = {-triangle 60,> = latex', dashed}}

		\node[style = {shape=circle,line width=0.80mm,draw,minimum size=1.3em}] (o) at    (4,2) {$o$};
		\node[style = {shape=circle,line width=0.80mm,draw,minimum size=1.3em}] (d) at    (4,0) {$d$};
		\node[style = {shape=circle,draw,minimum size=0.8em}] (v_1) at    (2,0) {};
		\node[style = {pattern=north east lines, pattern color={black!40}, shape=circle,draw,minimum size=0.8em}] (v_2) at    (2,2) {};
		\draw[edge] (o) to node[right] {} (v_1);
		%\draw[edge2] (o) to[loop right] node[right] {0} (o);

		\draw[edge] (v_1) to node[right] {} (v_2);
		%\draw[edge2] (v_1) to node[below] {0} (d);

		\draw[edge] (v_2) to[loop above] node[above] {} (v_2);
		%\draw[edge2] (v_2) to[bend right] node[left] {0} (v_1);

		%\draw[edge] (d) to[loop above] node[right] {0} (d);	
		%\draw[edge2] (d) to[loop right] node[above] {0} (d);	

		%\draw[edge] (a) to[loop above] node[above] {$s_0$} (a);
		%\draw[edge] (a) to node[right] {$s_1$} (b);
		%\draw[edge] (b) to node[below left] {$s_0$} (o);
		%\draw[edge] (b) to[loop right] node[right] {$s_1$} (b);
		%\draw[edge] (d) to[loop above] node[above] {$s_0,s_1$} (d);
	
\end{tikzpicture}}}
	\end{minipage}
	\begin{minipage}{0.21\linewidth}
		\centering
		\IfFileExists{./pictures/priklad_cycle_gl2.pdf}{
		\includestandalone[mode=image, scale=0.6]{pictures/priklad_cycle_gl2}
	}{\scalebox{0.7}{	\begin{tikzpicture}

		\Large
		\tikzset{edge/.style = {-triangle 60,> = latex'}}
		\tikzset{edge2/.style = {-triangle 60,> = latex', dashed}}

		\node[style = {shape=circle,line width=0.80mm,draw,minimum size=1.3em}] (o) at    (4,2) {$o$};
		\node[style = {shape=circle,line width=0.80mm,draw,minimum size=1.3em}] (d) at    (4,0) {$d$};
		\node[style = {pattern=north east lines, pattern color={black!40}, shape=circle,draw,minimum size=0.8em}] (v_1) at    (2,0) {};
		\node[style = {shape=circle,draw,minimum size=0.8em}] (v_2) at    (2,2) {};
		\draw[edge] (o) to node[right] {} (v_1);
		%\draw[edge2] (o) to[loop right] node[right] {0} (o);

		\draw[edge] (v_1) to node[right] {} (v_2);
		%\draw[edge2] (v_1) to node[below] {0} (d);

		%\draw[edge] (v_2) to[loop above] node[above] {} (v_2);
		\draw[edge] (v_2) to[bend right] node[left] {} (v_1);

		%\draw[edge] (d) to[loop above] node[right] {0} (d);	
		%\draw[edge2] (d) to[loop right] node[above] {0} (d);	

		%\draw[edge] (a) to[loop above] node[above] {$s_0$} (a);
		%\draw[edge] (a) to node[right] {$s_1$} (b);
		%\draw[edge] (b) to node[below left] {$s_0$} (o);
		%\draw[edge] (b) to[loop right] node[right] {$s_1$} (b);
		%\draw[edge] (d) to[loop above] node[above] {$s_0,s_1$} (d);
	
\end{tikzpicture}}}
	\end{minipage}
	\begin{minipage}{0.21\linewidth}
		\centering
		\IfFileExists{./pictures/priklad_cycle_gl3.pdf}{
		\includestandalone[mode=image, scale=0.6]{pictures/priklad_cycle_gl3}
	}{\scalebox{0.7}{	\begin{tikzpicture}

		\Large
		\tikzset{vertex/.style = {shape=circle,draw,minimum size=0.8em}}
		\tikzset{edge/.style = {-triangle 60,> = latex'}}
		\tikzset{edge2/.style = {-triangle 60,> = latex', dashed}}

		\node[style = {shape=circle,line width=0.80mm,draw,minimum size=1.3em}] (o) at    (4,2) {$o$};
		\node[style = {shape=circle, pattern=north east lines, pattern color={black!40}, line width=0.80mm,draw,minimum size=1.3em}] (d) at    (4,0) {$d$};
		\foreach \nodename/\x/\y in { v_1/2/0, v_2/2/2} {
			\node[vertex] (\nodename) at    (\x,\y) {};
		}
		\draw[edge] (o) to node[right] {} (v_1);
		%\draw[edge] (o) to[loop right] node[right] {} (o);

		%\draw[edge] (v_1) to node[right] {} (v_2);
		\draw[edge] (v_1) to node[below] {} (d);

		%\draw[edge] (v_2) to[loop above] node[above] {} (v_2);
		\draw[edge] (v_2) to[bend right] node[left] {} (v_1);

		%\draw[edge] (d) to[loop above] node[right] {} (d);	
		%\draw[edge] (d) to[loop right] node[above] {} (d);	

		%\draw[edge] (a) to[loop above] node[above] {$s_0$} (a);
		%\draw[edge] (a) to node[right] {$s_1$} (b);
		%\draw[edge] (b) to node[below left] {$s_0$} (o);
		%\draw[edge] (b) to[loop right] node[right] {$s_1$} (b);
		%\draw[edge] (d) to[loop above] node[above] {$s_0,s_1$} (d);
	
\end{tikzpicture}}}
	\end{minipage}
	\caption{An example of a switch graph $G$ and cycles in the graphs $G^*$ corresponding to partial run profiles after 3, 4, and 5 steps of the train (respectively from left to right). We use hatching to highlight the current end-vertex.}
	\label{fig:changing_Glast}
\end{figure}
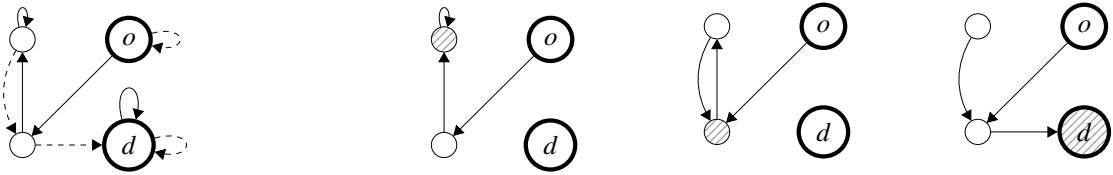

The main idea of the proof is based on the following fact:
a switching flow $f$ which is \emph{not} a run profile must contain a circulation (as shown by Dohrau\etal\cite{Dohrau2017}).
Let $f$ be a switching flow that we get from a run profile~$r$ by adding some flows on cycles,
then the last added circulation (the last added cycle) must form a cycle in the corresponding graph $G^*_f$.
On the other hand, a cycle containing the end-vertex is formed in $G^*_f$ whenever the train arrives to a previously visited vertex.
An illustration of the graph $ G^* $ at consecutive steps of the algorithm \RUN, with the corresponding evolution of the end-vertices and cycles, is given in Figure~\ref{fig:changing_Glast}.
\ifarxiveversioninternalmacro%%% this is the arxive version
	{

	}
		We prove the ``$\Leftarrow$'' implication of Lemma~\ref{lem:run-profile-characterization} by contradiction.
Given a switching flow $\f$ we consider the longest run profile $\br$ that is everywhere at most $\f$.
We denote the difference $\f - \br$ by $\bd$.
We utilize the following observation at two points in our proof.
First, to prove that both $ \f $ and $ \br $ have the same end-vertex, that is $v_{\br} = v_{\f}$.
Second, to prove that any cycle we find in the intersection of $G_{\f}^*$ and the non-zero edges of $\bd$ avoids $v_{\f}$.

\begin{observation}
Let $\f$ be a switching flow and let $\br$ be the longest partial run profile such that all coordinates are at most $\f$ and denote $\bd = \f - \br$.
Then $\bd_{s_0(v_{\br})} = \bd_{s_1(v_{\br})} = 0$.
\label{obs:trainEnd}
\end{observation}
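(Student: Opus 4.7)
The plan is to argue by contradiction, exploiting the parity condition on $\f$. Assume toward contradiction that $\bd_{s_0(v_{\br})} > 0$ or $\bd_{s_1(v_{\br})} > 0$. The goal is to show that \RUN{} can take one additional step from $v_{\br}$ without violating the coordinate-wise bound $\br \le \f$, producing a partial run profile strictly longer than $\br$ but still at most $\f$, thereby contradicting the maximality of $\br$.

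First I would identify the outgoing edge $e$ of $v_{\br}$ that the train would traverse next if \RUN{} continued. Since \RUN{} alternates $s_0(v), s_1(v), s_0(v), \dots$ each time the train visits a vertex $v$, and the total number of past departures from $v_{\br}$ equals $\br_{s_0(v_{\br})} + \br_{s_1(v_{\br})}$, the parity condition on $\br$ implies that $e = s_0(v_{\br})$ exactly when $\br_{s_0(v_{\br})} = \br_{s_1(v_{\br})}$, and $e = s_1(v_{\br})$ exactly when $\br_{s_0(v_{\br})} = \br_{s_1(v_{\br})} + 1$. Extending $\br$ by one simulated step of \RUN{} produces a partial run profile that agrees with $\br$ except for an increment of $1$ on coordinate $e$, so the extension is still coordinate-wise at most $\f$ iff $\bd_e > 0$.

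The key step is then to verify $\bd_e > 0$ in both parity cases. I would assume $\bd_e = 0$ and derive a contradiction from the parity condition $\f_{s_1(v_{\br})} \le \f_{s_0(v_{\br})} \le \f_{s_1(v_{\br})} + 1$ on $\f$. For example, in the case $\br_{s_0(v_{\br})} = \br_{s_1(v_{\br})}$ with $e = s_0(v_{\br})$, the hypothesis forces $\bd_{s_1(v_{\br})} > 0$, which gives $\f_{s_1(v_{\br})} > \br_{s_1(v_{\br})} = \br_{s_0(v_{\br})} = \f_{s_0(v_{\br})}$, violating the first half of the parity condition on $\f$. The other case ($\br_{s_0(v_{\br})} = \br_{s_1(v_{\br})} + 1$, $e = s_1(v_{\br})$) is symmetric and uses the second half $\f_{s_0(v_{\br})} \le \f_{s_1(v_{\br})} + 1$ to reach a contradiction.

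I do not anticipate a serious obstacle; the argument is a short parity bookkeeping exercise, and the only thing to watch is pairing the correct half of the parity inequality on $\f$ with each of the two cases for the parity of $\br$ at $v_{\br}$.
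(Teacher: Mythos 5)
Your proposal is correct and is essentially the paper's own argument, just with the roles of the two contradictions permuted: the paper uses maximality of $\br$ to conclude $\bd_e = 0$ and then contradicts the parity condition on $\f$, while you use the parity condition to conclude $\bd_e > 0$ and then contradict maximality. Same case split on which of $s_0(v_{\br}), s_1(v_{\br})$ the train would traverse next, same inequality chains, no gap.
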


\begin{proof}[Proof of Observation~\ref{obs:trainEnd}]
Suppose that either $\bd_{s_0(v_{\br})} \neq 0$ or $\bd_{s_1(v_{\br})} \neq 0$, and that $\br$ is such that the next edge for the train to continue on is $s_0(v_{\br})$.
From the maximality of $\br$ we get that $\bd_{s_0(v_{\br})}$ is equal to zero.
Then we get
\begin{align*}
%\f_{s_0(v)} &= \br_{s_0(v)} \tag{maximality of $\br$,} \\
%&= \br_{s_1(v)} \tag{Parity condition applied on $\br$,} \\
%&< \f_{s_1(v)} \tag{contradiction assumption -- $\bd_{s_1(v)}$ is non-zero.} \\
\f_{s_1(v_{\br})} &> \br_{s_1(v_{\br})} \tag{since $\bd_{s_1(v_{\br})}$ is non-zero} \\
&= \br_{s_0(v_{\br})} \tag{from the parity condition for $\br$} \\
&= \f_{s_0(v_{\br})}  \tag{from the maximality of $\br$}
\end{align*}
This leads to a contradiction with the parity condition of Definition~\ref{cond:parity} as $\f_{s_1(v_{\br})} > \f_{s_0(v_{\br})}$.

The other case is similar: suppose that the train should continue with the edge $s_1(v_{\br})$.
From the maximality of~$\br$ we get that $\bd_{s_1(v_{\br})}$ is equal to zero.
And thus we get
\begin{align*}
%\f_{s_1(v)} + 1 &= \br_{s_1(v)} + 1 \tag{maximality of $\br$,} \\
%&= \br_{s_0(v)} \tag{Parity condition applied on $\br$,} \\
%&< \f_{s_0(v)} \tag{contradiction assumption -- $\bd_{s_0(v)}$ is non-zero.} \\
\f_{s_0(v_{\br})} &> \br_{s_0(v_{\br})} \tag{since $\bd_{s_0(v_{\br})}$ is non-zero} \\
&= \br_{s_1(v_{\br})} + 1 \tag{from the parity condition for $\br$} \\
&= \f_{s_1(v_{\br})} + 1  \tag{from the maximality of $\br$}
\end{align*}
This gives a contradiction with the parity condition of Definition~\ref{cond:parity} as $\f_{s_0(v_{\br})} > \f_{s_1(v_{\br})} + 1$.
\end{proof}

\begin{proof}[Proof of Lemma~\ref{lem:run-profile-characterization}]
We start by proving that any partial run profile $\br$ has at most one cycle and that this cycle always contains the end-vertex $v_{\br}$.
We proceed by induction on the length of the run profile,\ie the number of steps in the algorithm \RUN.
The base case is the initial run profile $0^{2n}$ for which $G_{0^{2n}}^*$ contains no cycle.
For the induction step, let us assume that a vector $\br^i$ is a partial run profile and the vector $\br^{i+1}$ is the partial run profile after one step from $\br^{i}$.

If $G_{\br^i}^*$ contains a cycle then removing the edge from the end-vertex $v_{\br^i}$ to its successor in~$G_{\br^i}^*$ produces a cycle-free graph.
Adding an edge between $v_{\br^i}$ and $v_{\br^{i+1}}$ creates a graph with at most one cycle.
Moreover, if there is a cycle then it has to contain $v_{\br^{i+1}}$.

To prove the reverse implication, assume we have a partial switching flow~$\f$ that satisfies the conditions from the statement of this lemma,\ie $G_{\f}^*$ has at most one cycle and if the cycle is present it contains the end-vertex~$v_{\f}$.
Let us denote by $\br$ the longest partial run profile such that all its coordinates are at most $\f$, that is $\br_e \leq \f_e$ for each $e \in E$.
Consider the difference $\bd = \f - \br$ of the switching flow and its longest run profile.
We complete the proof by the following case analysis:

 \begin{enumerate} 
\item  If the end-vertex $v_{\f}$ is the same as the end-vertex $v_{\br}$ ($v_{\f} = v_{\br}$), then the difference $\bd = \f - \br$ is a flow, and moreover it is a circulation.
That is the Kirchhoff's law is satisfied in all vertices of $G$:
$$\forall v \in G\colon \sum_{e = (u,v) \in E} \bd_e - \sum_{e = (v,w) \in E} \bd_e = 0\ .$$
If $\bd$ is identically zero we are done as $\f = \br$, and thus $\f$ is a partial run profile.

Otherwise, we show that the circulation $\bd$ will result in a cycle in $G_{\f}^* $.
Namely we prove that $G_{\f}^*$ contains an outgoing edge from $v$ with a non-zero value in $\bd$ iff $v$ has a non-zero outgoing edge in $\bd$.
Using this and the fact that $\bd$ satisfies the Kirchhoff's law everywhere, we find a cycle in~$G_{\f}^*$.

Let $u$ be any vertex of $G$ such that only one of the edges $s_0(v)$ and $s_1(v)$ is non-zero in $\bd$.
We claim that then the non-zero edge is contained in the graph $G_{\f}^*$.
There are only two possible cases:
 \begin{enumerate} 

\item  Either $\br_{s_0(u)}=\f_{s_0(u)}$, and thus $\br_{s_0(u)} = \br_{s_1(u)}-1$ and $\f_{s_0(u)} = \f_{s_1(u)}$ (see Table~\ref{tab:1}), and $G_{\f}^*$ contains the edge $s_1(u)$,

\item  or $\br_{s_1(u)}=\f_{s_1(u)}$, and thus $\br_{s_0(u)}= \br_{s_1(u)}$ and $\f_{s_0(u)}  = \f_{s_1(u)}+1$ (see Table~\ref{tab:2}), and $G_{\f}^*$ contains the edge $s_0(u)$.

 \end{enumerate} 
On the other hand, if both $s_0(u)$ and $s_1(u)$ are non-zero in $\bd$ then $G_{\f}^*$ contains one of them.
Thus we may choose any non-zero edge in $\bd$ which is in  $G_{\f}^*$  and proceed by another adjacent non-zero directed edge (in~$\bd$) in $G_{\f}^*$ until we construct a directed cycle in $G_{\f}^*$.

By the Observation~\ref{obs:trainEnd}, we know that all outgoing edges from $v_{\br} = v_{\f}$ are zero in $\bd$, and thus the end-vertex~$v_{\f}$ is not contained in the cycle we have found.

\item
If the end-vertex $v_{\f}$ is not the same as the end-vertex $v_{\br}$ ($v_{\f} \neq v_{\br}$) then we get a contradiction with $\f$ being a partial switching flow.
It follows from Observation~\ref{obs:trainEnd} that
$$\sum_{e=(u,v_{\br}) \in E} \f_e - \sum_{e=(v_{\br},w) \in E} \f_e \geq \sum_{e=(u,v_{\br}) \in E} \br_e - \sum_{e=(v_{\br},w) \in E} \br_e = \begin{cases}
0 & v_{\br}=o,\\
1 & \text{otherwise,}\\
\end{cases}$$
which is in contradiction with $\f$ being a partial switching flow for which the end-vertex~$v_{\f}$ differs from~$v_{\br}$.
 \end{enumerate}  
 This concludes the proof of Lemma~\ref{lem:run-profile-characterization}.
\end{proof}

\begin{table}[t]
	\begin{subtable}{.49\linewidth}
		\begin{tabular}{|c|c|c|}
			\hline
				&  $s_0(u)$ & $s_1(u)$ \\
			\hline
				$\bd$ &  $0$ & $1$ \\
			\hline
				$\br$ &  $a$ & $a-1$ \\
			\hline
				$\f$ &  $a$ & $a$ \\
			\hline
		\end{tabular}
		\caption{Case (1.a).}
		\label{tab:1}
	\end{subtable}
	\begin{subtable}{.49\linewidth}
		\begin{tabular}{|c|c|c|}
			\hline
				&  $s_0(u)$ & $s_1(u)$ \\
			\hline
				$\bd$ &  $1$ & $0$ \\
			\hline
				$\br$ &  $a$ & $a$ \\
			\hline
				$\f$ &  $a+1$ & $a$ \\
			\hline
		\end{tabular}
		\caption{Case (1.b).}
		\label{tab:2}
	\end{subtable}
	\caption{Case analysis from the proof of Lemma~\ref{lem:run-profile-characterization} when only one of the edges $s_0(u)$ and $s_1(u)$ is non-zero in $\bd$.}
\end{table}

\else%%% this is not the arxive version
	%%% leave the {} there! it makes a space
	{}The full proof of Lemma~\ref{lem:run-profile-characterization} is provided in  Appendix~\ref{sec:run-profile-characterization-proof}.
\fi

\begin{lemma}\label{lem:run-profile-verification}
It is possible to verify in polynomial time whether a vector is a run profile.
\end{lemma}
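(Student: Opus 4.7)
The plan is to derive Lemma~\ref{lem:run-profile-verification} directly from the characterization given by Lemma~\ref{lem:run-profile-characterization}, together with the polynomial-time computability results for end-vertices and for the graph $G_{\f}^*$. The overall verification algorithm will consist of a short sequence of independent polynomial-time checks, and the correctness proof will amount to observing that all of these checks can be performed efficiently.

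First, I would check that the input vector $\f \in \N^{2n}$ is a (partial) switching flow, i.e., that it satisfies both Kirchhoff's Law and the Parity Condition from Definition~\ref{def:partialswitchingflow}. Both conditions are explicit linear/arithmetic tests on $\f$ and can be verified in polynomial time; this step was already observed by Dohrau\etal\cite{Dohrau2017}. Next, I would use Observation~\ref{obs:endVertexPoly} to compute the end-vertex $v_{\f}$ in polynomial time and verify that $v_{\f} = d$, which is what distinguishes a full run profile from a merely partial one. I would then check the loop condition $\f_{s_0(d)} = \f_{s_1(d)} = 0$ in constant time.

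At this point the characterization from Lemma~\ref{lem:run-profile-characterization} is applicable. I would invoke Observation~\ref{obs:lastEdgesPoly} to construct the graph $G_{\f}^*$ in polynomial time. The key structural fact is that every vertex of $G_{\f}^*$ has out-degree at most one, so $G_{\f}^*$ is a functional subgraph and each of its weakly connected components contains at most one directed cycle; such cycles can be enumerated in linear time by, for instance, following the unique outgoing edge from each vertex until a previously visited vertex is reached, or by a standard DFS that marks vertices lying on the functional trajectory from each starting point. Because $v_{\f} = d$ and both outgoing edges of $d$ have zero flow, $d$ has no outgoing edge in $G_{\f}^*$ and therefore cannot lie on any cycle; consequently, by Lemma~\ref{lem:run-profile-characterization}, the condition to check reduces to ``$G_{\f}^*$ is acyclic'', which is exactly what the cycle search decides.

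The verifier accepts iff all of the above tests succeed. Correctness follows by combining Lemma~\ref{lem:run-profile-characterization} with the condition $v_{\f} = d$, and the total running time is polynomial since each individual step is polynomial. I do not expect a real obstacle here: the only potentially non-trivial ingredient, namely recognising whether a switching flow corresponds to a true run profile, has already been reduced by Lemma~\ref{lem:run-profile-characterization} to an out-degree-one cycle search, which is essentially the easiest possible cycle-detection task.
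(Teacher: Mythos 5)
Your proof is correct and follows essentially the same route as the paper's: check the switching-flow conditions, build $G_{\f}^*$ via Observation~\ref{obs:lastEdgesPoly}, and apply the characterization of Lemma~\ref{lem:run-profile-characterization} with a linear-time cycle search on the out-degree-one graph. Your additional observation that $v_{\f}=d$ together with $\f_{s_0(d)}=\f_{s_1(d)}=0$ forces $d$ off every cycle, collapsing the check to plain acyclicity, is a small but valid streamlining of what the paper phrases as testing ``more than one cycle, or a cycle avoiding the end-vertex.''
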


\begin{proof}
	We can check that a vector $\f$ is a switching flow in polynomial time due to Dohrau~et~al.~\cite{Dohrau2017}.
	The construction of the graph $G_{\f}^*$ is polynomial by Observation~\ref{obs:lastEdgesPoly}.
	Lemma~\ref{lem:run-profile-characterization} gives us a polynomial time procedure to check if $\f$ is also a run profile as it is sufficient to check if $G_{\f}^*$ contains more than one cycle or whether it has a cycle not containing the end-vertex.
	This check can be done by a simple modification of the standard depth-first search on $G_{\f}^*$.
\end{proof}

\section{The Computational Complexity of \ARRIVAL}\label{sec:arrival-complexity}

In this section we use our efficient structural characterization of run profiles from Lemma~\ref{lem:run-profile-characterization} to improve the known results about the computational complexity of \ARRIVAL.
Specifically, we show that the decision version of \ARRIVAL is in \UPcoUP and the search version is in~\CLS.

\subsection{The Decision Complexity of \ARRIVAL}\label{sec:arrival-decision-complexity}

Our upper bound on the decision complexity of \ARRIVAL follows directly from the work of Dohrau\etal\cite{Dohrau2017} by application of Lemma~\ref{lem:run-profile-verification}.

\begin{theorem}\label{thm:upCapCoup}
	\ARRIVAL is in \UPcoUP.
\end{theorem}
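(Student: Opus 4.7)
The plan is to use run profiles, rather than arbitrary switching flows, as the \UP and \coUP witnesses, and to invoke Lemma~\ref{lem:run-profile-verification} for polynomial-time verification. The uniqueness required for both \UP and \coUP membership follows from the fact that the algorithm \RUN is deterministic, so there is at most one run profile from $o$ to any given end-vertex.

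For the \UP direction (yes-instances), I would take the witness to be the run profile $\br$ output by \RUN from $o$ to $d$. If the train terminates at $d$, then \RUN produces exactly one such $\br$, and by Observation~1 of~\cite{Dohrau2017} it is a switching flow from $o$ to $d$; conversely, by Lemma~\ref{lem:run-profile-verification} any claimed witness can be checked in polynomial time to be a run profile, and its end-vertex (which is polynomial-time computable by Observation~\ref{obs:endVertexPoly}) must equal $d$. If the train does not terminate at $d$, no such run profile exists, so there is no accepting witness. Uniqueness is immediate because \RUN is deterministic.

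For the \coUP direction (no-instances), I would follow the construction of Dohrau et~al.~\cite{Dohrau2017}: augment the switch graph with a new dead-end vertex $\bar{d}$ and redirect every edge leading to a vertex with no directed path to $d$ so that it points to $\bar{d}$ instead; also set $s_0(\bar{d}) = s_1(\bar{d}) = \bar{d}$. In this modified graph $G'$, the train always terminates at either $d$ or $\bar{d}$, and it terminates at $\bar{d}$ iff the original instance is a no-instance. The \coUP witness is then the unique run profile in $G'$ from $o$ to $\bar{d}$. Verification again uses Lemma~\ref{lem:run-profile-verification} applied to $G'$ together with a check that the end-vertex equals $\bar{d}$; note that $G'$ is polynomial-time constructible (the reachability question defining the redirected edges is in \NL, hence in \P). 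Uniqueness once again follows from determinism of \RUN on $G'$.

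The main (small) obstacle is essentially bookkeeping: one has to verify that the augmented graph $G'$ still satisfies the running assumption $s_0(d) = s_1(d) = d$ and the analogous self-loops at $\bar{d}$, so that $\f_{s_0(d)} = \f_{s_1(d)} = 0$ for the \coUP witness (where the train never enters $d$), meeting the precondition of Lemma~\ref{lem:run-profile-characterization}. Once that is set up, there is nothing further to prove: in both directions the witness exists iff the answer is what it should be, it is unique by determinism, and its correctness is checked in polynomial time by Lemma~\ref{lem:run-profile-verification}.
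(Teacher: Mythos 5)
Your proposal is correct and takes essentially the same route as the paper: the \UP witness is the unique run profile from $o$ to $d$, the \coUP witness is the unique run profile from $o$ to the dead-end $\bar{d}$ in the augmented graph of Dohrau~et~al., and in both cases verification is handled by Lemma~\ref{lem:run-profile-verification} while uniqueness follows from the determinism of \RUN. The small bookkeeping points you raise (checking the end-vertex, the self-loops at $d$ and $\bar{d}$, polynomial-time constructibility of $G'$) are all consistent with what the paper implicitly relies on.
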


\begin{proof}
The unique \UP certificate for a YES-instance of \ARRIVAL is the run profile $\br$ returned by the algorithm \RUN.
Clearly, for each YES-instance there exists only one such vector $\br$ and $\br$ does not exist for NO-instances.
By Lemma~\ref{lem:run-profile-verification}, we can determine whether a candidate switching flow $\br$ is a run profile in polynomial time.

The \coUP membership follows directly from the reduction of NO-instances of \ARRIVAL to YES-instances of \ARRIVAL as suggested by Dohrau\etal\cite{Dohrau2017}.
The reduction adds to the original graph $ G $ a new vertex $ \bar{d} $, and for each vertex $v\in V$ such that there is no directed path from $v$ to the destination $d$, the edges $s_0(v)$ and $s_1(v)$ are replaced with edges $(v,\bar{d})$.
This alteration of the original switch graph can be performed in polynomial time.
Dohrau~et~al.~\cite{Dohrau2017} proved that the train eventually arrives either at $d$ or $\bar{d}$.
The unique \coUP witness for \ARRIVAL is then a run profile from $ o $ to the dead-end $\overline{d}$.
\end{proof}

\subsection{The Search Complexity of \ARRIVAL}\label{sec:arrival-search-complexity}

The search complexity of \ARRIVAL was first studied by Karthik~C.~S.~\cite{Karthik17}, who introduced a total search variant of \ARRIVAL as follows.

\begin{definition}[\SARRIVAL~\cite{Karthik17}]\label{def:sarrival}
	Given a switch graph $G=(V,E,s_0,s_1)$ and a pair of vertices~$o,d \in V$, define a graph $G'$ as follows:
	 \begin{enumerate} 
		\item Add a new vertex $\bar{d}$.
		\item For each vertex $v$ such that there is no directed path from $v$ to $d$, replace edges $s_0(v)$ and $s_1(v)$ with edges $(v,\bar{d})$.
		\item Edges $s_0(d),s_1(d), s_0(\bar{d})$, and $s_1(\bar{d})$ are self-loops.
	 \end{enumerate} 
	The problem \SARRIVAL is to find a switching flow in $ G' $ either from $o$ to $d$ or from $o$ to $\bar{d}$.
\end{definition}

The above Definition~\ref{def:sarrival} is motivated by the proof of membership in \NPcoNP by Dohrau\etal\cite{Dohrau2017}.
Namely, in order to ensure that a solution for \SARRIVAL always exists, it was necessary to add to the switch graph $ G $ the dead-end vertex $ \bar{d} $.

Note that our method for efficient verification of run profiles from Lemma~\ref{lem:run-profile-verification} allows us to define a more natural version of \SARRIVAL directly on the graph~$G$ without any modifications.
Instead of relying on the dead-end vertices, we can use the fact that a partial run profile with an edge that was visited for $ 2^n + 1 $ times is an efficiently verifiable witness for NO-instances of \ARRIVAL.

\begin{definition}[\SARRIVAL\ - simplified]\label{def:simple-sarrival}
Given a switch graph $G=(V,E,s_0,s_1)$ and a pair of vertices $o,d \in V$, the \SARRIVAL problem asks us to find one of the following:
 \begin{enumerate} 
	\item a run profile $ \br\in \N^{[2n]} $ from $ o $ to $ d $, or
	\item a %\DIFaddbegin \marginpar{partial run profile? -- I do not think so because of definition of run profile from a vertex to a vertex} \DIFaddend
          run profile $ \br\in \N^{[2n]} $ from $ o $ to any $ v\in V $ such that
		 \begin{itemize} 
			\item $ \br_{(u,v)} = 2^n+1$, where $ u $ is the last vertex visited by the train before it reached the end-vertex $ v $ of $ \br $, and
			\item $ \br_{e'} \le 2^n $ for all $ e' \neq (u,v) $.
		 \end{itemize} 
 \end{enumerate} 
\end{definition}

The correspondence of the above version of \SARRIVAL to the original one follows formally from the following lemma.

\begin{lemma}[Karthik C.~S.~{\cite[Lemma 1]{Karthik17}}]\label{lem:runProfileUpperBound}
For any $G=(V,E,s_0,s_1)$ and a pair of vertices $o,d \in V$. Let $\br$ be a run profile (thus $v_{\br} = d$), then $\br_e \leq 2^n$ for each edge~$e \in E$.
\end{lemma}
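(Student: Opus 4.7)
The plan is to bound $\br_e$ by a pigeonhole argument on the state space of the algorithm \RUN. I would first define a \emph{configuration} to be a pair $(v,\sigma) \in V \times \B^{V}$, where $v$ is the current position of the train and, for each vertex $u$, the bit $\sigma_u$ records whether $s\_\text{curr}[u]$ currently equals $s_0(u)$ or $s_1(u)$. The set of configurations has cardinality exactly $n \cdot 2^n$.

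The first step is to observe that \RUN is forward-deterministic: the configuration after iteration $t+1$ is a function of the configuration after iteration $t$. Hence, if the same non-terminal configuration ever reappears, the trajectory becomes periodic and never reaches $d$. Contrapositively, whenever \RUN terminates, the configurations visited at iterations $0,1,2,\dots$ must be pairwise distinct.

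The second step refines this observation to a per-edge bound. Fix an edge $e = (u,w) = s_b(u)$ for some $b \in \B$. Edge $e$ is traversed during iteration $t$ if and only if the configuration at the start of iteration $t$ is of the form $(u,\sigma)$ with $\sigma_u = b$. These two requirements pin down the position and one coordinate of $\sigma$ while leaving the other $n-1$ coordinates unconstrained, so there are exactly $2^{n-1}$ configurations from which $e$ is used next. Combined with the previous step, each such configuration appears at most once along the run, giving $\br_e \le 2^{n-1} \le 2^n$, as claimed.

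There is no deep obstacle here; the only point that deserves explicit justification is the non-repetition of configurations along a terminating run. This follows immediately from forward determinism together with the fact that $d$ is a terminal state for \RUN: any configuration cycle avoiding $d$ would preclude termination, contradicting the hypothesis that $\br$ is a run profile.
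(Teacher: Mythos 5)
Your proof is correct and matches the standard argument (the paper itself only cites this lemma from Karthik C.~S.\ without reproving it, but it explicitly uses the same key observation in Section~5, namely that no vertex--switch-state pair can repeat during a terminating run). Your per-edge refinement, fixing both the position $u$ and the bit $\sigma_u=b$ to get the sharper bound $2^{n-1}$, is a minor tightening of the same pigeonhole on the $n\cdot 2^n$-element configuration space.
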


To argue membership of our version of \SARRIVAL in \TFNP, we need to show that both types of solutions in Definition~\ref{def:simple-sarrival} can be verified efficiently.
Solutions of the first type are simply run profiles, and we have already shown that they can be verified in polynomial time in Lemma~\ref{lem:run-profile-verification}.
In order to be able to verify solutions of the second type, it remains to argue that for any partial run profile, the immediate predecessor of its end-vertex can be determined in polynomial time.

\begin{lemma}\label{lem:finding-predecessor}
	Let $\br$ be a partial run profile after $R \geq 1$ steps and $u$ be the vertex visited by the train at step $R-1$.
	Then
	 \begin{enumerate} 
		 \item either $u$ is the unique predecessor of $v_{\br}$ in $G_{\br}^*$, or
		\item  there is a single cycle in $G_{\br}^*$ containing $v_{\br}$ and $u$ is the predecessor of $v_{\br}$ on this cycle.
	 \end{enumerate} 
%Given a partial run profile $\br$ of $R \geq 1$ steps,
%the position $u \in V$ of the train at step $R - 1$ can be determined from $\br$ in polynomial time.
\end{lemma}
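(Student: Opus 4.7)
The plan is to start from the basic fact that the edge $(u, v_{\br})$ always lies in $E^*$. Indeed, at step $R-1$ the train departs $u$ via the edge $(u, v_{\br})$ and, since the train now sits at $v_{\br}\neq u$, it has not left $u$ again; hence $(u, v_{\br})$ is the last-used out-edge of $u$ and belongs to $G_{\br}^*$. So $u$ is always \emph{a} predecessor of $v_{\br}$; the content of the lemma is that it is the only one in the acyclic case and the cycle-predecessor otherwise. The case split will follow the dichotomy provided by Lemma~\ref{lem:run-profile-characterization}.

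For the acyclic case (1) I would show that $v_{\br}$ itself has no outgoing edge in $G_{\br}^*$, equivalently that the train has never departed from $v_{\br}$. Suppose the train had previously been at $v_{\br}$ at some earlier step $R^{*} < R$. Consider the train's walk between $R^{*}$ and $R$ and, for each vertex $w$ appearing on this walk, let $L(w)$ be the position of its last occurrence. The vertex $v_{\br}$ is visited only at the endpoints (by the choice of $R^{*}$ as the most recent visit), so at position $0$. Starting from $v_{\br}$ and following out-edges in $G_{\br}^*$ corresponds to jumping from the current vertex $w$ to the successor of $w$ at position $L(w)+1$ of the walk, which strictly advances the position. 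Since the position is bounded by the length of the walk, this process terminates by returning to $v_{\br}$ and hence exhibits a cycle in $G_{\br}^*$, contradicting acyclicity. So $v_{\br}$ is visited only at step $R$; therefore step $R-1$ is the only step at which any vertex can have departed to $v_{\br}$, and $u = v_{R-1}$ is the unique predecessor.

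For the cyclic case (2) I would exploit the incremental description of $G^{*}$ already used in the proof of Lemma~\ref{lem:run-profile-characterization}: passing from $\br^{R-1}$ to $\br = \br^{R}$ deletes the previous out-edge of $u$ (if one existed) and inserts $(u, v_{\br})$. Applying Lemma~\ref{lem:run-profile-characterization} to $\br^{R-1}$, any cycle in $G_{\br^{R-1}}^{*}$ passes through $u = v_{\br^{R-1}}$, and hence through its then-current out-edge; consequently, removing that edge from $G_{\br^{R-1}}^{*}$ yields an acyclic graph. Any cycle present in $G_{\br}^{*}$ must therefore use the newly inserted edge $(u, v_{\br})$, which places $u$ on the unique cycle of $G_{\br}^{*}$ as the immediate predecessor of $v_{\br}$.

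The main obstacle I anticipate is the acyclic case, specifically the ``last-departure jumping'' argument that a prior visit to $v_{\br}$ forces a cycle in $G_{\br}^{*}$. The subtlety is that intermediate vertices on the round trip from $v_{\br}$ back to $v_{\br}$ may be revisited several times, but because we always follow the true last departure of the current vertex, every jump strictly advances along the walk, which makes termination at $v_{\br}$ automatic. All remaining checks reduce to Lemma~\ref{lem:run-profile-characterization} and routine bookkeeping about when a vertex contributes an out-edge to $G_{\br}^{*}$.
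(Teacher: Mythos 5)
Your proof is correct, and the argument for the cyclic case takes a genuinely different route from the paper's. After establishing (as you both do) that $(u,v_{\br})$ is always an edge of $G_{\br}^*$, the paper handles the cyclic case by explicitly constructing the cycle $C$ via the maximal simple path out of $v_{\br}$, then introducing the last-departure-time function $t_{\br}\colon V\to\{0,\dots,R-1\}$ and observing that $t_{\br}(x)<t_{\br}(y)$ for every edge $(x,y)\in G_{\br}^*$ except $(u,v_{\br})$; since a strict inequality cannot cycle, $C$ must contain $(u,v_{\br})$. You instead use the incremental description of $G^*$: the last step of \RUN swaps $u$'s old out-edge for $(u,v_{\br})$, and since $G_{\br^{R-1}}^*$ minus $u$'s out-edge is acyclic by Lemma~\ref{lem:run-profile-characterization}, every cycle of $G_{\br}^*$ must pass through the newly inserted edge. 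This is a clean shortcut that exploits the one-edge-changes-per-step structure directly rather than re-deriving it via a global time function. For the acyclic case, you supply an explicit contradiction (a prior visit to $v_{\br}$ forces a cycle via the ``last-departure jump'' along the return walk) to conclude that $v_{\br}$ is seen only at step $R$ and hence has $u$ as its sole predecessor; the paper essentially leaves that direction implicit, noting only that $(u,v_{\br})\in G_{\br}^*$ so a unique predecessor must equal $u$. Note that your jump argument is really the same time-ordering idea as the paper's $t_{\br}$ function, applied locally to the sub-walk from $R^*$ to $R$, so the two proofs share this ingredient even though they deploy it in opposite cases. Both proofs cover the required conclusions; yours is slightly more self-contained and, for the cyclic case, simpler.
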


\begin{proof}
%Note that except the case when the train is in the starting position it always has a predecessor.
First, note that if $u$ is the end-vertex one step before $ v_{\br} $ becomes the end-vertex then~$G_{\br}^*$ must contain the edge $(u,v_{\br})$, as it is the last edge used by the train to leave $ u $.
Thus, in the first case the immediate predecessor of $v_{\br}$ in the partial run $ \br $ is unambiguously given by the only predecessor of~$v_{\br}$ in $G_{\br}^*$.

For the second case we show that $G_{\br}^*$ contains a directed cycle $C$ (containing the end-vertex $v_{\br}$) and $u$ is unambiguously given by the predecessor of $ v_{\br} $ in $G_{\br}^*$ that lies on $C$.
We find the cycle $C$ by constructing the longest possible directed path $c_0=v_{\br}, c_1, \ldots ,c_k$ in $ G_{\br}^* $ without repeating vertices.
Note that it cannot happen that $c_k$ has no outgoing edge in $G_{\br}^*$.
Otherwise, $\br$ would have two different end-vertices $ v_{\br} $ and $ c_k $ (as having no outgoing edge in~$G_{\br}^*$ means that the train has never left this vertex).
By Lemma~\ref{lem:run-profile-characterization}, the directed edge from $c_k$ has to end in the end-vertex $v_{\br}$, or else there would be a cycle in $G_{\br}^*$ that avoids $v_{\br}$.

The algorithm \RUN takes $R$ steps to generate the run profile $\br$,\ie  $\sum_{e \in E} \br_e = R$.
%For contradiction we suppose that $c_k$ is not the vertex visited in step $R-1$.
Let $t_{\br}\colon V \rightarrow \{0,1, \ldots , R-1\}$ be the function returning the last step after which a vertex was left by the train in the partial run profile $ \br $.
Observe that, except for the edge through which the train arrived to $v_{\br}$,\footnote{The inequality does not hold for $ v_{\br} $, since $t(v_{\br})$ has not been updated to the time $R$ yet.} it holds for all edges $(x,y) \in G_{\br}^*$ that
$
t_{\br}(x) < t_{\br}(x)+1 \leq t_{\br}(y).
$
%\DIFaddbegin \marginpar{Wouldn't $t_{\br}(x) < t_{\br}(y)$ be descriptive enough?}
%\DIFaddend
However, the above inequality cannot hold for all edges on the cycle $C$, and thus $C$ has to contain the last used edge and the train had to be in~$c_k$ at step $R - 1$.
%
%We claim that $c_k$ is the last vertex visited before $v_{\br}$ in run of the train described by $r$ .
%Let the algorithm run for $R$ steps before it generates run profile $r$, that is $\sum_{e \in E} r_e = R$.
%For contradiction we suppose that $c_k$ is not the vertex visited in step $R-1$.
%Let $t\colon V \rightarrow \{0,1, \ldots , R-1\}$ be a function returning the last step in the past in which $v$ was visited.
%Observe that $t(c_i)\geq t(c_{i-1})+1 >t(c_{i-1})$ for all $i>1$.
%Vertex $v_{\br}$ was visited in time $t(c_k)+1$ as $(c_k, v_{\br})$ is an edge in $G_{\br}^*$.
%Especially $t(c_k)+1\leq t(c_0)$.
%As we supposed that the train came from a vertex different from $c_k$ the train has to visited $v_{\br}$ from $c_k$ earlier.
%We get that $t(c_0)<t(c_1) \ldots < t(c_k) < t(c_{k})+1 \leq t(c_0)$ which is a contradiction.
\end{proof}

\begin{observation}
	\SARRIVAL from Definition~\ref{def:sarrival} reduces to simplified \SARRIVAL from Definition~\ref{def:simple-sarrival}.
\end{observation}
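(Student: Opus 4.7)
The plan is to give a simple polynomial-time reduction that directly invokes the simplified oracle on a closely related instance. Concretely, given an instance $(G,o,d)$ of \SARRIVAL from Definition~\ref{def:sarrival}, I first construct the augmented switch graph $G'$ on $n+1$ vertices exactly as prescribed by that definition; the only non-trivial step is a standard reachability computation to identify the vertices of $G$ from which $d$ is not reachable. I then invoke the simplified \SARRIVAL oracle of Definition~\ref{def:simple-sarrival} on the instance $(G',o,d)$ and output the returned vector $\br$ verbatim.

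Correctness follows from a case distinction on the two admissible witnesses the oracle can return. If $\br$ is a run profile from $o$ to $d$ in $G'$, then $\br$ is in particular a switching flow from $o$ to $d$ in $G'$, which is one of the admissible solutions of Definition~\ref{def:sarrival}. If instead $\br$ is a partial run profile whose last-used edge has been used exactly $2^{n+1}+1$ times while every other edge has been used at most $2^{n+1}$ times, I claim that the end-vertex satisfies $v_{\br}=\bar d$, so that $\br$ is a switching flow from $o$ to $\bar d$ in $G'$, the other admissible solution.

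To establish the claim in the second case, I would combine Dohrau\etal's key structural result~\cite{Dohrau2017} that in $G'$ the train always reaches $d$ or $\bar d$ with Lemma~\ref{lem:runProfileUpperBound}. The train cannot reach $d$ in $G'$, for otherwise the resulting full run profile would have every coordinate bounded by $2^{n+1}$, contradicting the witness. Hence the train reaches $\bar d$; reinterpreting the same trajectory as a \RUN execution on the auxiliary instance $(G',o,\bar d)$, in which \RUN terminates at $\bar d$, Lemma~\ref{lem:runProfileUpperBound} again bounds every coordinate by $2^{n+1}$ at the moment of arrival at $\bar d$. Thus the first edge to attain value $2^{n+1}+1$ must be one of the two self-loops at $\bar d$, traversed only after the train has settled at $\bar d$, which forces $v_{\br}=\bar d$. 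The main obstacle is precisely this last step: transferring Lemma~\ref{lem:runProfileUpperBound} from the designated destination $d$ to $\bar d$, formally done via the auxiliary instance $(G',o,\bar d)$ because \RUN itself never terminates at $\bar d$ in $G'$. Everything else --- the polynomial-time construction of $G'$ and the verification that $\br$ has the form required by Definition~\ref{def:sarrival} --- is routine bookkeeping.
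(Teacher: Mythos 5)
Your proof is correct, and it takes a genuinely different route from the paper's. The paper passes the \emph{original} instance $(G,o,d)$ to the simplified-\SARRIVAL oracle and, upon receiving a ``long'' partial run profile, manufactures the desired switching flow to $\bar d$ by zeroing the flow on all edges leaving vertices that cannot reach $d$ and then putting one unit of flow on $s_0(v_{\br})$, where $v_{\br}$ is the end-vertex of the resulting vector. You instead pass the \emph{augmented} instance $(G',o,d)$ to the oracle and argue that its answer can be returned verbatim, because in the second case the overflowing edge must already be a self-loop at $\bar d$, so the returned vector is itself a switching flow from $o$ to $\bar d$. Your approach trades a (slightly) larger oracle instance for the conceptual simplicity of not having to construct or re-verify a new vector; the paper's approach keeps the oracle instance on $n$ vertices at the cost of the flow-surgery step. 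One small remark: the step you flag as the ``main obstacle'' --- transferring Lemma~\ref{lem:runProfileUpperBound} from destination $d$ to $\bar d$ --- is in fact immediate, since the lemma as stated in the paper is quantified over \emph{any} pair of vertices $o,d$, so invoking it on the auxiliary instance $(G',o,\bar d)$ requires no generalization at all.
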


\begin{proof}
	Given a solution of the second type of the simplified \SARRIVAL,\ie the long run profile, we can get a run profile $\br$ to $\bar{d}$ in polynomial time.
	For each vertex $u$ we can determine whether there is an oriented path from it to the destination $d$, and if there is no such path we set $\br_{s_0(v)} = \br_{s_1(v)} = 0$.
	We compute the end vertex $v_{\br}$ and set $s_0(v_{\br}) = 1$.
	All other components of $\br$ are set according to the original solution of the simplified \SARRIVAL.
\end{proof}

\subsubsection{\SARRIVAL is in \CLS}\label{sec:SARRIVAL-in-CLS}

Karthik~C.~S.~\cite{Karthik17} showed that \SARRIVAL is contained in the class \PLS.
We improve this result and prove that \SARRIVAL is in fact contained in \CLS. As a by-product, we also obtain a randomized algorithm for \SARRIVAL with runtime $\mathcal{O}(1.4143^n)$ which is the first algorithm for this problem with expected runtime $\mathcal{O}(c^n)$ for $c<2$.

The class of total search problems that are amenable to ``continuous'' local search was defined by Daskalakis and Papadimitriou~\cite{DaskalakisP11} using the following canonical problem.

\begin{definition}[\CLS~\cite{DaskalakisP11}]\label{def:CLS}
$\CLS$ is the class of total search problems reducible to the following problem called $\CLOPT$.

Given two arithmetic circuits $f\colon[0,1]^3\rightarrow[0,1]^3$ and $p\colon[0,1]^3\rightarrow[0,1]$, and two real constants $\eps,\lambda>0$, find either a point $x\in[0,1]^3$ such that $p(f(x)) \leq p(x)+\eps$ or a pair of points $x,x'\in[0,1]^3$ certifying that either $p$ or $f$ is not $\lambda$-Lipschitz.
\end{definition}

Instead of working with $ \CLOPT $, we use as a gateway for our reduction a problem called \EOML (\EOMLshort) which was recently defined and shown to lie in \CLS by Hubáček and Yogev~\cite{HubacekY17}.

\begin{definition}[\EOML]\label{def:EOML}
	Given circuits $S,P\colon \B^m\rightarrow\B^m$, and $V\colon\B^m \rightarrow  [2^m]\cup\{0\} $ such that $P(0^m) = 0^m \neq S(0^m)$ and $ V(0^m)=1$, find a string $x\in\B^m$ satisfying one of the following:
	 \begin{enumerate} 
		\item either $P(S(x)) \neq x$ or $S(P(x)) \neq x \neq 0^m$,
		\item $ x\neq 0^m $ and $V(x)=1$,
		\item either $V(x)> 0$ and $ V(S(x))-V(x)\neq 1$
		or $V(x)> 1$ and  $V(x)-V(P(x))\neq 1$.
	 \end{enumerate} 
\end{definition}

The circuits $ S,P $ from Definition~\ref{def:EOML} implicitly represent a directed graph with vertices labelled by binary strings of length $m$, where each vertex has both out-degree and in-degree at most one.
The circuit $P$ represents the predecessor and the circuit $S$ represents the successor of a given vertex as follows: there is an edge from a vertex~$u$ to a vertex $v$ iff $S(u) = v$ and~$P(v) = u$.
Finally, the circuit $ V $ can be thought of as an odometer that returns the distance from the trivial source at $ 0^m $ or value $ 0 $ for vertices lying off the path starting at the trivial source.
The task in \EOML is to find a sink or a source different from the trivial one at $0^m$ (the solutions of the second and of the third type in Definition~\ref{def:EOML} ensure that $ V $ behaves as explained above).

We are now ready to present our reduction from \SARRIVAL to \EOML.

%\begin{lstlisting}[caption={ {\sc Successor} },label=alg:successor,captionpos=t,float,abovecaptionskip=-\medskipamount,mathescape=true]
%Input: a vector $\bx\in ([2^n]\cup \left\{ 0 \right\})^{2n}$
%Output: a vector $\boldsymbol{y}\in ([2^n]\cup \left\{ 0 \right\})^{2n}$
%if ($\bx$ is a partial run profile) {
  %compute an end-vertex $v_{\bx}$;
  %if ($v_{\bx} = d$ or $v_{\bx} = \bar{d}$) {
    %return $\bx$; //the train reaches one of the destinations
  %} else {
    %$p = \f_{s_1(v_{\bx})} - \f_{s_0(v_{\bx})}$; //the parity of the number of visits
    %$e = s_p(v_{\bx})$; //compute edge which should be used now
    %$\boldsymbol{y} = \bx$;
    %$\boldsymbol{y}_e = \boldsymbol{y}_e+1$; //update run profile
    %return $\boldsymbol{y}$;
  %}
%}
%return $\bx$; //self-loop
%\end{lstlisting}

%\begin{lstlisting}[caption={ {\sc Predecessor} },label=alg:predecessor,captionpos=t,float,abovecaptionskip=-\medskipamount,mathescape=true]
%Input: a vector $\bx\in ([2^n]\cup \left\{ 0 \right\})^{2n}$
%Output: a vector $\boldsymbol{y}\in ([2^n]\cup \left\{ 0 \right\})^{2n}$
%if ($\boldsymbol{0} = \bx$) {
%return $\boldsymbol{0}$; //there is no predecessor
%}
%if ($\bx$ is a partial run profile) {
%compute an end-vertex $v_{\bx}$;
%if ($v$ has only one incoming edge in $G_{\bx}^*$) {
%$e$ is the only incoming edge of $v_{\bx}$ in $G_{\bx}^*$;
%} else {
%$e$ is the only incoming edge of $v_{\bx}$ lying on the directed cycle;
%}
%$\boldsymbol{y}=\bx$;
%$\boldsymbol{y}_e = \boldsymbol{y}_e-1$; //update run profile
%return $\boldsymbol{y}$;
%}
%return $\bx$; //self-loop
%\end{lstlisting}

\begin{theorem}\label{thm:cls}
	\SARRIVAL can be reduced to \EOML, and thus it is contained in \CLS.
\end{theorem}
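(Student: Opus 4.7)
The plan is to encode partial switching flows as binary strings of length $m = \mathcal{O}(n^2)$: each of the $2n$ flow coordinates requires at most $n+1$ bits, since by Lemma~\ref{lem:runProfileUpperBound} every run-profile coordinate is bounded by $2^n$, and the simplified \SARRIVAL solution in Definition~\ref{def:simple-sarrival} allows a single coordinate to reach $2^n+1$. The trivial source $0^m$ of the \EOMLshort instance will correspond to the all-zero switching flow, and the unique directed path starting from $0^m$ in the induced \EOMLshort graph will trace out the sequence of partial run profiles generated by \RUN on the input switch graph.

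On input $f$, the successor circuit $S$ first applies Lemma~\ref{lem:run-profile-characterization} to test whether $f$ is a partial run profile; if not, it returns $S(f)=f$ so that $f$ becomes an isolated self-loop. Otherwise it computes the end-vertex $v_f$ via Observation~\ref{obs:endVertexPoly}. If $v_f = d$, or if some coordinate of $f$ already equals $2^n+1$, then $f$ already encodes a solution of simplified \SARRIVAL and we again set $S(f)=f$, making $f$ a sink. In the remaining case, the parity condition of Definition~\ref{def:partialswitchingflow} together with the values $f_{s_0(v_f)}, f_{s_1(v_f)}$ uniquely determines the edge $e$ used next by \RUN (namely $s_0(v_f)$ if $f_{s_0(v_f)}=f_{s_1(v_f)}$, and $s_1(v_f)$ otherwise), and $S(f)$ is obtained from $f$ by incrementing coordinate $f_e$ by one.

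The predecessor circuit $P$ is defined analogously using Lemma~\ref{lem:finding-predecessor}: on input $f \neq 0^m$ that is a partial run profile, it identifies the previous end-vertex $u$ as either the unique predecessor of $v_f$ in $G_f^*$ (when $G_f^*$ is acyclic) or the predecessor of $v_f$ on the unique cycle (when one exists), and returns the flow obtained by decrementing $f_{(u,v_f)}$. We set $P(0^m)=0^m$, and $P(f)=f$ whenever $f$ is not a partial run profile. The odometer is $V(f) = 1 + \sum_e f_e$ when $f$ is a partial run profile and $V(f)=0$ otherwise; this never exceeds $1 + 2n(2^n+1) \le 2^m$, so $V$ has the range required by Definition~\ref{def:EOML}, and $V(0^m)=1$ as required. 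All three circuits are computable in time polynomial in $n$ by Observation~\ref{obs:lastEdgesPoly} and Lemmas~\ref{lem:run-profile-verification} and~\ref{lem:finding-predecessor}.

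For correctness, the sequence of partial run profiles visited by \RUN forms a single directed path $0^m \to f_1 \to \cdots \to f^{\ast}$ on which $S$ and $P$ are honest inverses and $V$ increases by exactly one per step, while every other binary string is an isolated self-loop with $V=0$. One then checks that each \EOMLshort solution condition from Definition~\ref{def:EOML} can only be triggered at the endpoint $f^{\ast}$ (where $P(S(f^{\ast}))=P(f^{\ast})\neq f^{\ast}$ and $V(S(f^{\ast}))-V(f^{\ast})=0$), so any solver for the \EOMLshort instance must return such an $f^{\ast}$, which by construction encodes exactly a simplified \SARRIVAL solution. Containment of \SARRIVAL in \CLS then follows from the fact that \EOML is in \CLS~\cite{HubacekY17}. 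The main technical subtlety — and where Lemma~\ref{lem:run-profile-characterization} does the crucial work — is certifying that non-run-profile switching flows can be recognized and quarantined efficiently; without that characterization we would have no way to collapse them into isolated vertices while keeping $S, P, V$ polynomial-time computable.
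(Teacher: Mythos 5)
Your proposal is correct and follows essentially the same approach as the paper: encode candidate partial switching flows as binary strings of length $\mathcal{O}(n^2)$, use Lemma~\ref{lem:run-profile-characterization} to recognize (and self-loop) any vector that is not a partial run profile, step forward/backward via the parity of the end-vertex and Lemma~\ref{lem:finding-predecessor}, and set $V$ to one plus the number of train steps so that the only \EOMLshort solution is the terminal sink, which by construction encodes a solution to simplified \SARRIVAL. The paper writes these out as explicit \algo{Successor}, \algo{Predecessor}, and \algo{Valuation} circuits, but the construction and the correctness argument are the ones you give.
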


\begin{proof}
Let $ (G,o,d) $ be an instance of \SARRIVAL.
We construct an instance of \EOMLshort that contains a vertex for each candidate partial switching flow over the switch graph $ G $,\ie for each vector with $2n$ coordinates and values from $[2^n+1]\cup \left\{ 0 \right\}$.
The \EOMLshort instance will comprise of a directed path starting at the initial (empty) partial run profile $0^{2n}$.
Each vertex on the path has an outgoing edge to its consecutive partial run profile.
Any vertex that does not correspond to a partial run profile becomes a self-loop.
Finally, the valuation circuit $V$ returns either the number of steps in the corresponding partial run profile or the zero value if the vertex does not correspond to a partial run profile.

Formal description of the circuits $S$, $P$, and $V$ defining the above \EOMLshort graph is given by algorithms \Successor (Algorithm~\ref{alg:successor}), \Predecessor (Algorithm~\ref{alg:predecessor}), and \Valuation (Algorithm~\ref{alg:valuation})%%%%
\ifarxiveversioninternalmacro%%% this is the arxive version
.
		%\begin{algorithm}[t]
\begin{algorithm}[h!]
	\small
	\SetKwInOut{Input}{Input}
	\SetKwInOut{Output}{Output}
	\Input{ a vector $\bx\in ([2^n+1]\cup \left\{ 0 \right\})^{2n}$ }
	\Output{ a vector $\boldsymbol{y}\in ([2^n+1]\cup \left\{ 0 \right\})^{2n}$ }
	\eIf(\tcp*[f]{efficiently testable by Lemma~\ref{lem:run-profile-verification}}){$\bx$ is a partial run profile}{
			compute the end-vertex $v_{\bx}$\\
			\eIf{$v_{\bx}=d$ or $\bx_e= 2^n + 1$ for some $ e\in E $ }{
				\Return $\bx$\hfill\tcp{the train terminates or runs for too long}
			}{
				$ b \gets \bx(s_1(v_{\bx}))-\bx(s_0(v_{\bx}))$\hfill\tcp{parity of the current visit at $ v_{\bx} $}
				$e \gets s_{b}(v_{\bx})$ \hfill\tcp{the next edge to traverse}
				$\boldsymbol{y}_h \gets \begin{cases}
					\bx_h + 1 & \text{if } h= e\\
					\bx_h& \text{otherwise}\\
				\end{cases}$\hfill\tcp{run profile update}
				\Return $\boldsymbol{y}$
			}
		}
		{\Return $\bx$\hfill\tcp{self-loop}}
	\caption{\Successor}
	\label{alg:successor}
\end{algorithm}

		%\begin{algorithm}[t]
\begin{algorithm}[h!]
	\small
	\SetKwInOut{Input}{Input}
	\SetKwInOut{Output}{Output}
	\Input{ a vector $\bx\in ([2^n+1]\cup \left\{ 0 \right\})^{2n}$ }
	\Output{ a vector $\boldsymbol{y}\in ([2^n+1]\cup \left\{ 0 \right\})^{2n}$ }
	\If{$ \bx = \boldsymbol{0}$}{
		\Return $\boldsymbol{0}$\hfill\tcp{the trivial source}
	}
	%\eIf(\tcp*[f]{efficiently testable by Lemma~\ref{lem:run-profile-verification}}){$\bx$ is a partial run profile}{
	\If(\tcp*[f]{efficiently testable by Lemma~\ref{lem:run-profile-verification}}){$\bx$ is a partial run profile}{
		compute the end-vertex $v_{\bx}$\\
		\tcp{correctness by Lemma~\ref{lem:finding-predecessor}}
		\eIf{$v_{\bx}$ has a single predecessor in $G_{\bx}^*$}{
			$e \gets$  the only incoming edge of $v_{\bx}$ in $G_{\bx}^*$
		}
		{
			$e \gets $ the only incoming edge of $v_{\bx}$ which lies on a directed cycle
		}
		\If{ $\bx_e \le 2^n + 1$ and $\bx_{e'} < 2^n + 1$ for all $e'\neq e$}{
		%\eIf{ $\bx_e \le 2^n + 1$ and $\bx_{e'} < 2^n + 1$ for all $e'\neq e$}{
		$\boldsymbol{y}_h \gets \begin{cases}
		\bx_h - 1 & \text{if } h= e\\
		\bx_h& \text{otherwise}\\
		\end{cases}$\\
		\Return $\boldsymbol{y}$
		}
	}
	\Return $\bx$\hfill\tcp{self-loop}
	\caption{\Predecessor}
	\label{alg:predecessor}
\end{algorithm}

		\begin{algorithm}[h!]
	\small
	\SetKwInOut{Input}{Input}
	\SetKwInOut{Output}{Output}
	\Input{ a vector $\bx\in ([2^n+1]\cup \left\{ 0 \right\})^{2n}$ }
	\Output{ a value $v\in \N$ }
	$ v \gets 0 $\hfill\tcp{default value for self-loops}
	\If(\tcp*[f]{efficiently testable by Lemma~\ref{lem:run-profile-verification}}){$\bx$ is a partial run profile}{
		$v \gets 1 + \sum_{i=1}^{2n} \bx_i$
	}
	\Return $v$
	\caption{\Valuation}
	\label{alg:valuation}
\end{algorithm}

\else%%% this is not the arxive version
	%%% leave the {} there! it makes a space
	{} 	in Appendix~\ref{sec:appendixAlg}.
\fi
A polynomial bound on the size of the circuits $S, P$, and $V$ follows directly from Observation~\ref{obs:lastEdgesPoly} (computing~$G^*$), Lemma~\ref{lem:run-profile-verification} (testing whether a given vector is a partial run profile), Observation~\ref{obs:endVertexPoly} (computing the end-vertex), and Lemma~\ref{lem:finding-predecessor} (computing the previous position of the train).

Lemma~\ref{lem:run-profile-characterization} and Lemma~\ref{lem:finding-predecessor} imply that the \EOMLshort graph indeed consists of a single directed path and isolated vertices with self-loops.
By the construction of $ V $ (it outputs the number of steps of the train), there are no solutions of the second or the third type (\cf Definition~\ref{def:EOML}).
Thus, the \EOMLshort instance has a unique solution which has to correspond to a run profile in the original \SARRIVAL instance or to a partial run profile certifying that the train ran for too long (see the second type of solution in Definition~\ref{def:simple-sarrival}).
\end{proof}

\section{An $\mathcal{O}(1.4143^n)$ Algorithm for  \SARRIVAL}\label{sec:efficient-CLS}
Consider any problem that can be put into the complexity class \PLS,\ie can be reduced to the canonical \PLS-complete problem \LOPT (see also~\cite[Definition 1]{Karthik17}):

\begin{definition}[\LOPT]\label{def:LOPT}
	Given circuits $S\colon \B^m\rightarrow\B^m$, and $V\colon\B^m \rightarrow  [2^m]\cup\{0\} $, find a string $x\in\B^m$ such that $V(x) \ge V(S(x))$.
\end{definition}

Aldous~\cite{Aldous83} introduced the following simple algorithm that can be used to solve \LOPT:
pick $2^{m/2}$ binary strings uniformly and independently at random from $\B^m$, and let $x_{\max}$ be the selected string that maximizes the value $V(x)$.
Starting from $x = x_{\max}$, repeatedly move to the successor $S(x)$, until $V(x) \ge V(S(x))$.
He showed that the expected number of circuit evaluations performed by the algorithm before finding a local optimum is at most~$\mathcal{O}(m 2^{m/2})$. Note that in the case of \EOMLshort it is possible that all sampled solutions are isolated vertices in the \EOMLshort graph. In this case the $0^m$ string has the best known value, and the search is started from there.

In the case of \SARRIVAL, the \PLS membership proof of Karthik~C.~S. constructs the circuits for successor and valuation with $\mathcal{O}(n^2)$ input
bits~\cite[Theorem 2]{Karthik17}; the \EOMLshort instance constructed
in Theorem~\ref{thm:cls}---proving \CLS and in particular \PLS
membership---yields circuits of $\mathcal{O}(n^2)$ input bits as well. This
number is too high to yield a randomized algorithm of non-trivial runtime.

This number of $\mathcal{O}(n^2)$ input bits comes from the obvious encoding of
a partial run profile: each of the $2n$ edges has a nonnegative
integer flow value of at most $2^n+1$. But in fact, a
\emph{terminating} run has at most $n2^n$ partial run profiles, as no
\emph{vertex-state pair} $(v, s\_\text{curr})$ can repeat in
Algorithm~\ref{alg:run}. In other words, a partial run profile $\f$ is
determined by its end-vertex~$v_{\f}$ as well as the positions of all
switches at the time of the corresponding visit of $v_{\f}$. This
means that a partial run profile can be encoded with $n+\log_2 n$
bits, and if we had a \PLS or \CLS membership proof of \SARRIVAL\ with
circuits of this many inputs only, we could solve \SARRIVAL\ in time
$O(\poly(n)\times 2^{n/2})$.

Next, we show that such membership proofs indeed exist. For this, we show that the
above encoding (of a partial run profile by an end-vertex and the
positions of all switches) can be efficiently decoded: given an
end-vertex and the positions of all switches, we can efficiently
compute a unique candidate for a corresponding partial run
profile. The resulting encoding and decoding circuits can be composed
with the ones in Theorem~\ref{thm:cls} to obtain \PLS and \CLS
membership proofs with circuits of $n+\log_2 n$ input bits, and hence yield
a randomized algorithm of runtime
$O(\poly(n)\times 2^{n/2}) = O(1.4143^n)$, as explained above.

\subsection{Decoding Partial Run Profiles}
%We work with \emph{halting} instances of \SARRIVAL. Using the
%construction of Definition~\ref{def:sarrival}, this is without loss of
%generality~\cite{Dohrau2017,Karthik17}.
We work with instances of \SARRIVAL as in Definition~\ref{def:sarrival}, i.e., there is always a run profile either to $d$ or to $\bar{d}$.
This is without loss of generality~\cite{Dohrau2017,Karthik17}.

\begin{definition}
	%Let $G=(V,E,s_0,s_1)$ be a switch graph, and let
  %$o,d,\bar{d}\in V$. $(G,o,d,\bar{d})$ is a \emph{halting} instance
  %of \SARRIVAL\ if from every vertex $v\in V\setminus\{d,\bar{d}\}$,
  %there is a directed path to $d$ or $\bar{d}$. A \emph{partial halting run
  %profile} is a switching flow corresponding to some intermediate value of $\br$ in the
  %algorithm {\sc Run} (Algorithm~\ref{alg:run}) when we replace the termination condition
  %``$v\neq d$'' with ``$v\neq d,\bar{d}$''.
	Let $G=(V,E,s_0,s_1)$ be a switch graph, and let
	$o,d,\bar{d}\in V$ be as in Definition~\ref{def:sarrival}.
  The \emph{parity} of a run profile $\f\in\N^{2n}$ is the vector
  $\bp_{\f}\in\{0,1\}^{n-2}$ defined by
  $\bp_v=\f_{s_0(v)}-\f_{s_1(v)}\in\{0,1\}$, $v\in V\setminus\{d,\bar{d}\}$.
\end{definition}
Note that we do not care about
%\todo{original: the (parity of) the}
(the parity of) the switches at $d$ and $\bar{d}$, since the algorithm stops as soon as $d$ or $\bar{d}$ is reached.

Here is the main result of this section. For a given target vertex
$t\in V$ and given parity~$\bp$, there is exactly one candidate for a
partial run profile from $o$ to $t$ with parity
$\bp$. Moreover, this candidate can be computed by solving a system of
linear equations.

\begin{lemma}\label{lem:decode}
Let  $(G,o,d,\bar{d})$ be an instance
  of \SARRIVAL, let $t\in V$ and $\bp\in \{0,1\}^{n-2}$. Then there
  exists exactly one vector $\f \in \R^{2(n-2)}$ such that the following conditions hold.
 \begin{description} 
\item[Kirchhoff's Law (flow conservation):]\label{cond:kirchhoff2}
\begin{equation}\label{eq:kirchhoff2}
\forall v \in V\setminus\{d,\bar{d}\}\colon \sum_{e=(u,v) \in E}
\f_e - \sum_{e=(v,w) \in E} \f_e = [v = t] - [v = o]\,
\end{equation}
where $[\cdot]$ is the indicator variable of the event in brackets.
\item[Parity Condition:]\label{cond:parity2}
$\bp_{\f}=\bp$,\ie
\begin{equation}\label{eq:parity2}
\forall v \in V\setminus\{d,\bar{d}\}\colon \f_{s_0(v)}-\f_{s_1(v)} = \bp_v.
\end{equation}
 \end{description} 
\end{lemma}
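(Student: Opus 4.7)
The system in Lemma~\ref{lem:decode} is linear with $2(n-2)$ unknowns (two outgoing edges per vertex in $V\setminus\{d,\bar{d}\}$) and $2(n-2)$ equations ($n-2$ Kirchhoff constraints and $n-2$ parity constraints). My plan is to establish existence and uniqueness simultaneously by showing that the coefficient matrix of this square system is non-singular, which is equivalent to showing that the only solution of the homogeneous system is $\f=0$.

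First I would eliminate half of the unknowns using the parity condition. For each $v\in V\setminus\{d,\bar{d}\}$ I set $\f_{s_1(v)}=y_v$ and $\f_{s_0(v)}=y_v+\bp_v$, which automatically satisfies~(\ref{eq:parity2}). Substituting into~(\ref{eq:kirchhoff2}), and using that by Definition~\ref{def:sarrival} the vertices $d$ and $\bar{d}$ have only self-loops and therefore contribute nothing to the in-flow of any other vertex, the Kirchhoff equations collapse to a single $(n-2)\times(n-2)$ system of the form $(2I - A)\,y = b$, where $A_{v,u}$ counts the edges from $u$ to $v$ inside $V\setminus\{d,\bar{d}\}$ and $b\in\R^{n-2}$ is a fixed vector depending only on $\bp$, $o$, and $t$.

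The crux, and the step I expect to be the main obstacle, is to show that $2I-A$ is non-singular. I would do this via a random-walk interpretation: let $P:=\tfrac{1}{2}A^{\top}$ be viewed as the sub-stochastic transition matrix of the Markov chain on $V\setminus\{d,\bar{d}\}$ that, from each vertex $u$, moves to $s_0(u)$ or $s_1(u)$ with probability $\tfrac{1}{2}$ each and is killed upon landing in $\{d,\bar{d}\}$. By the construction of the \SARRIVAL instance (Definition~\ref{def:sarrival}) every vertex of $G$ has a directed path of length at most $n$ to $\{d,\bar{d}\}$, and the random walk follows this path with probability at least $2^{-n}$; hence starting from any vertex the walk is absorbed in each block of $n$ steps with probability bounded below by $2^{-n}$, so it is absorbed almost surely, $P^k\to 0$ entrywise as $k\to\infty$, the Neumann series $\sum_{k\ge 0}P^k$ converges, and $I-P$ is invertible. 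This yields invertibility of $2I-A^{\top}$ and therefore of $2I-A$, so the reduced system has a unique real solution $y$, from which one recovers the unique $\f\in\R^{2(n-2)}$ claimed. (A purely linear-algebraic alternative also works: if $(2I-A^{\top})z=0$, then because each row-sum of $A^{\top}$ is at most $2$, with strict inequality precisely when $u$ has an outgoing edge to $\{d,\bar{d}\}$, a maximum-modulus argument propagated along the walk forces $z\equiv 0$.)
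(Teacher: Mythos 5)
Your proof is correct, and it reaches the conclusion by a genuinely different decomposition than the paper. The paper works with the full $2(n-2)\times 2(n-2)$ coefficient matrix $A$ of the combined Kirchhoff/parity system and proves $A$ non-singular by exhibiting a cleverly chosen right-hand side $\bq$ (encoding the edges into $d$) for which the \emph{transposed} system $A^{\top}x=\bq$ has a unique solution $x=(\lambda,\mu)$; adding the two equations per vertex eliminates $\mu$ and leaves precisely the vertex-value equations $\lambda_v=\tfrac12(\lambda_{s_0(v)}+\lambda_{s_1(v)})$, $\lambda_d=1$, $\lambda_{\bar d}=0$ of a stopping simple stochastic game, whose uniqueness they import from Condon. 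You instead eliminate half the primal variables up front using the parity constraints, collapsing the system to an $(n-2)\times(n-2)$ one of the form $(2I-A)\,y=b$, and you prove $2I-A$ non-singular directly by observing that $\tfrac12 A^{\top}$ is the transition matrix of a killed random walk that is absorbed in $\{d,\bar d\}$ almost surely (because Definition~\ref{def:sarrival} guarantees a path to $\{d,\bar d\}$ from every vertex), so $\rho(\tfrac12 A^{\top})<1$ and the Neumann series converges. The two arguments rest on the same structural fact (reachability of the sinks), and indeed your spectral-radius/maximum-modulus argument is essentially a self-contained proof of the special case of Condon's uniqueness theorem that the paper cites as a black box. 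What the paper's route buys is brevity via a known result; what yours buys is a smaller explicit linear system, an elementary and self-contained invertibility proof, and an explicit inverse $\sum_{k\ge 0}(\tfrac12 A)^{k}$. One small stylistic note: you reuse the symbol $A$ for your reduced $(n-2)\times(n-2)$ adjacency-count matrix while the paper uses $A$ for the full $2(n-2)\times 2(n-2)$ coefficient matrix; if this were to be merged with the paper it would be worth renaming to avoid a clash.
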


Before we prove Lemma~\ref{lem:decode}, let us draw a crucial conclusion: The unique
partial run profile $\f\in\N^{2n}$ with end-vertex $t$ and
parity $\bp$ (if such $ f $ exists -- note that we are only guaranteed a real-valued $ \f $) necessarily satisfies (\ref{eq:kirchhoff2}) and
(\ref{eq:parity2}). Hence, we may use Lemma~\ref{lem:decode} to get
the entries $\f_e$ for all edges except the ones leaving~$d$ and~$\bar{d}$.
%Only if all these entries are nonnegative and integral and
Only if all the entries are nonnegative and integral and
satisfy (\ref{eq:kirchhoff2}) at $d$ and $\bar{d}$ (under
$\f_{s_0(d)}=\f_{s_1(d)}=\f_{s_0(\bar{d})}=\f_{s_1(\bar{d})}=0$) do we have a candidate for a partial run profile. Hence, there is a
unique candidate, and given $t$ and a $\bp$, this candidate can
be efficiently found.

\begin{proof}[Proof of Lemma~\ref{lem:decode}]
  Set $m=2(n-2), V'=V\setminus\{d,\bar{d}\}$ and let
	$A\in\Z^{m\times m}$ be the coefficient matrix of the linear
  system (\ref{eq:kirchhoff2}), (\ref{eq:parity2}) in the variables
  $\f_e$. We show that $A$ is invertible.

	Let $\bq\in\R^{m}$ be the vector such that $\bq_{(v,s_i(v))}=-1$ if
  $s_i(v)= d$ and $\bq_{(v,s_i(v))}=0$ otherwise. We show that $\bq$ can be
  expressed as a linear combination of the rows of $A$ in a unique
  way, from which invertibility of $A$ and the statement of the lemma
  follow.

  Let us use coefficients $\lambda_v$ for each $v\in V'$ for the rows corresponding to the flow conservation constraints~(\ref{eq:kirchhoff2}), and coefficients $\mu_v$ for each $v\in V'$ for the rows corresponding to the parity constraints~(\ref{eq:parity2}).
  The column of $A$ corresponding to variable
	$\f_{(v,s_i(v))}$, has a $-1$ entry from the flow
  conservation constraint at $v$, and a $1$ entry (if $i=0$) or a $-1$
  entry (if $i=1$) from the parity constraint at $v$. If
  $s_i(v)\neq d,\bar{d}$, there is another $1$ entry from the flow
  conservation constraint at $s_i(v)$.  All other entries are
  zero. The equations that express $\bq$ as a linear combination of
  rows of $A$ are therefore the following.
	\begin{align*}
		&\forall v \in V': ~ - \lambda _v + \mu_v + \lambda_{s_0(v)} \cdot [s_0(v)\neq d,\bar{d}] ~=~ \bq_{(v,s_0(v))},\\
		&\forall v \in V': ~ - \lambda _v - \mu_v + \lambda_{s_1(v)} \cdot [s_1(v)\neq d,\bar{d}] ~=~ \bq_{(v,s_1(v))}.
	\end{align*}
	Or equivalently:
    \begin{equation}\label{eq:ssg1}
      \lambda_v - \mu_v = \left\{\begin{array}{ll}
                                   \lambda_{s_0(v)}, & \mbox{if
                                   $s_0(v)\neq d,\bar{d}$} \\
                                   1, &\mbox{if $s_0(v)=d$} \\
                                   0, & \mbox{if $s_0(v)=\bar{d}$}
                                 \end{array}\right.\quad v\in V',
     \end{equation}
  \begin{equation}\label{eq:ssg2}
      \lambda_v + \mu_v = \left\{\begin{array}{ll}
                                   \lambda_{s_1(v)}, & \mbox{if
                                   $s_1(v)\neq d,\bar{d}$} \\
                                   1, &\mbox{if $s_1(v)=d$} \\
                                   0, & \mbox{if $s_1(v)=\bar{d}$}
                                 \end{array}\right. \quad v\in V'.
     \end{equation}

    We now show that there are unique coefficients $\lambda_v,\mu_v$
    satisfying these equations. Let us define $\lambda_d=1$ and
    $\lambda_{\bar{d}}=0$. Adding corresponding equations of
    (\ref{eq:ssg1}) and (\ref{eq:ssg2}) then yields
		\[
		\lambda_d = 1, \quad \lambda_{\bar{d}} = 0, \quad \lambda_v = \frac{1}{2}\left(\lambda_{s_0(v)}+\lambda_{s_1(v)}\right), \quad \forall v\in V'.
		\]
  %\begin{eqnarray*}
      %\lambda_v &=&
                    %\frac{1}{2}\left(\lambda_{s_0(v)}+\lambda_{s_1(v)}\right), \quad v\in V',\\
      %\lambda_d &=& 1, \\
      %\lambda_{\bar{d}} &=& 0.
    %\end{eqnarray*}
    These are exactly the equations for the vertex values in a
    \emph{stoppping simple stochastic game} on the graph $G$ with only
    average or degree-1 vertices and sinks $d$ and $\bar{d}$ (stopping
    means that $d$ or $\bar{d}$ are reachable from everywhere which is
    exactly what we require in a switch graph). Condon proved
    that these values are unique~\cite{Con}. This also determines the
    $\mu_v$'s uniquely.
\end{proof}

\section{Conclusion and Open Problems}\label{sec:conclusion}

We showed that candidate run profiles in \ARRIVAL can be efficiently verified due to their structure.
This allowed us to improve the known upper bounds for the search complexity of \ARRIVAL and \SARRIVAL.
Here we mention some natural questions arising from our work.
 \begin{itemize} 
%\item Given a switching flow $\f$, is it possible to derive the run profile from $o$ to $d$ in polynomial time?
%Such efficient procedure might be useful as a sub-routine in a polynomial time algorithm for either \ARRIVAL or \SARRIVAL.
\item Are there any non-trivial graph properties that make \ARRIVAL or \SARRIVAL efficiently solvable?
Given that we currently do not know of any polynomial time algorithm for \ARRIVAL on general switch graphs, we could study the complexity of \ARRIVAL on some interesting restricted classes of switch graphs.
\item Are there other natural problems in \UPcoUP such that their corresponding search variant is reducible to \EOMLshort?
Does \EOML capture the computational complexity of any \TFNP problem with unique solution?
Fearnley\etal\cite{FearnleyGMS17} recently gave a reduction from the \PLCPshort to \EOMLshort.
Given that \ARRIVAL and \PLCPshort can be both reduced to \EOMLshort, yet another intriguing question is whether there exists any reduction between the two.
%Our upper bound on the search complexity of \ARRIVAL gives one of the first reductions of a natural computational problem to \EOML.
%In general, it would be interesting to study the relationship between \SARRIVAL and other problems that were recently shown to lie in \CLS, such as the problems considered in \cite{DaskalakisTZ17,FearnleyGMS17}.
\item
  As mentioned in Section \ref{sec:our-results}, the reduction from \PLCPshort to \EOMLshort by Fearnley\etal\cite{FearnleyGMS17} implies that \PLCPshort can be solved faster with Aldous' algorithm~\cite{Aldous83} than with any other known algorithm. It would be interesting to see whether Aldous' algorithm can similarly give improved runtimes for other problems than \ARRIVAL and \PLCPshort.
\end{itemize} 

\subparagraph*{Acknowledgements.}
We wish to thank Karthik~C.~S. for helpful discussions and suggestions.

\bibliography{bibliography}

%% .. or use the thebibliography environment explicitely

\end{document}